\newtheorem{theorem}{Theorem}
\newtheorem{lemma}{Lemma}
\newtheorem{definition}{Definition}
\newtheorem{remark}{Remark}
\newtheorem{assumption}{Assumption}
\newcommand{\Tr}{\textbf{\textrm{tr}}}
\newcommand{\Id}{\textbf{\textrm{I}}}
\newcommand{\Bcal}{\mathcal{B}}\newcommand{\Ecal}{\mathcal{E}}\newcommand{\Gcal}{\mathcal{G}}\newcommand{\Rcal}{\mathcal{R}}\newcommand{\Vcal}{\mathcal{V}}
\newcommand{\Dfrak}{\mathfrak{D}}\newcommand{\Nfrak}{\mathfrak{N}}
\newcommand{\Emb}{\mathbb{E}}\newcommand{\Rmb}{\mathbb{R}}\newcommand{\Zmb}{\mathbb{Z}}
\newcommand{\xih}{\hat{x}_{i,k|k}}\newcommand{\xib}{\hat{x}_{i,k|k-1}}
\newcommand{\eih}{\hat{\eta}_{i,k|k}}\newcommand{\eib}{\hat{\eta}_{i,k|k-1}}
\newcommand{\Pih}{P_{i,k|k}}\newcommand{\Pib}{P_{i,k|k-1}}
\newcommand{\xjh}{\hat{x}_{j,k|k}}\newcommand{\xjb}{\hat{x}_{j,k|k-1}}
\newcommand{\ejb}{\hat{\eta}_{j,k|k-1}}
\newcommand{\Pjh}{P_{j,k|k}}\newcommand{\Pjb}{P_{j,k|k-1}}
\newcommand{\onebf}{\textbf{1}}
\newcommand{\zerobf}{\textbf{0}}
\title{\LARGE \bf
On the Consistency and Confidence of Distributed Dynamic State Estimation in Wireless Sensor Networks
}
\author{Shaocheng Wang and Wei Ren
\thanks{Shaocheng Wang and Wei Ren are with the University of California, Riverside, CA 92507, USA. Email: shaocheng.wang@email.ucr.edu, ren@ece.ucr.edu.
{This paper has been accepted by the 54th IEEE Conference on Decision and Control in December, 2015 at Osaka, Japan.}
}
}
\begin{document}

\maketitle
\thispagestyle{empty}
\pagestyle{empty}


\begin{abstract}
The problem of distributed dynamic state estimation in wireless sensor networks is studied.
Two important properties of local estimates, namely, the consistency and confidence, are emphasized.
On one hand, the consistency, which means that the approximated error covariance is lower bounded by the true unknown one, has to be guaranteed so that the estimate is not over-confident.
On the other hand, since the confidence indicates the accuracy of the estimate, the estimate should be as confident as possible.
We first analyze two different information fusion strategies used in the case of information sources with, respectively, uncorrelated errors and unknown but correlated errors.
Then a distributed hybrid information fusion algorithm is proposed,
where each agent uses the information obtained not only by itself, but also from its neighbors through communication.
The proposed algorithm not only guarantees the consistency of the estimates, but also utilizes the available information sources in a more efficient manner and hence improves the confidence.
Besides, the proposed algorithm is fully distributed and guarantees convergence with the sufficient condition formulated.
The comparisons with existing algorithms are shown.
\end{abstract}

\section{Introduction}
One of the most fundamental but important properties that an estimator should have is the \emph{consistency} \cite{jazwinski1970stochastic,JulierUhlmann97_CI}.
That is, the approximated error covariance of an estimator should be lower bounded by the true error covariance.
The approximated error covariance of an inconsistent estimate does not actually indicate its uncertainty \cite{drummond2006target}.
This would cause issues in performance when it is further used in the downstream functions (data associations etc.).
The consistency is especially important in the information fusion of distributed sensor networks.
When fusing the information sources with unknown correlations, simply ignoring the unknown correlations might cause inconsistency in estimation.
The \emph{Covariance Intersection} (CI, \cite{JulierUhlmann97_CI}) algorithm is proposed to guarantee the consistency of estimates in the fusion process without knowing the cross-correlations between information sources.
On the other hand, although an estimate should be consistent, it should not be too conservative.
An over-conservative estimate indicates less confidence, which further indicates that the estimate is less useful.
Therefore, it is significant to maximize the confidence while maintaining the consistency of the estimates.

\emph{Related work}.
Ref. \cite{RenBeardKingston05} considers the consensus algorithm from the Kalman filters' perspective.
The algorithm is further modified in \cite{AlighanbariUnbiasedKalman} so that unbiased estimates can still be obtained even if the outflows of each agent are not equal to each other.
Both algorithms in \cite{RenBeardKingston05} and \cite{AlighanbariUnbiasedKalman} are proposed to estimate a static state.
Ref. \cite{olfati2007distributedKalman} proposes the well-known Kalman-consensus filter (KCF), which combines the consensus filter and the local Kalman filter, and estimates the state of a linear dynamic process.
The optimality and stability of the KCF is analyzed in \cite{olfati2009kalman}.
The KCF requires the assumption that each agent and its neighbors have joint observability of the state of interest.
If a state of interest is observed by neither a certain agent nor any of its local neighbors, this agent is referred as being \emph{naive} about the state \cite{kamal2011GKCF,kamal2013ICFTAC}.
An example of camera networks is shown in \cite{kamal2011GKCF}, where the generalized KCF (GKCF) is proposed, and is shown to outperform the KCF especially when there exist naive agents.
The same situation is considered in \cite{kamal2013ICFTAC}, where the information-consensus filter (ICF) is proposed, and is shown to asymptotically approach the centralized Kalman filter (CKF) when there are infinite communication steps in between local updates.
Recently, Ref. \cite{Battistelli2014KLA_Automatica} proposes an innovative approach by achieving consensus on the local probability density functions (PDFs), referred as the ``Kullback-Leibler average" (KLA) therein.
It is analytically shown in \cite{Battistelli2014KLA_Automatica} that each agent will eventually achieve bounded approximated estimate error covariance if the graph is strongly connected, even if each agent communicates with its neighbors for only once before updating its local estimate.

\emph{Contributions. }
While the current literature primarily focuses on the recovery of centralized estimators or the estimate errors of distributed algorithms, in this paper, we emphasize on the importance of the consistency and confidence of estimates.
Considered from the perspective of information fusion,
a \emph{distributed hybrid information fusion} (DHIF) algorithm is proposed.
In the proposed DHIF, each agent cooperates with its neighbors, utilizes all available information sources, and comes up with its local estimate that is consistent while relatively confident.
We focus on a more realistic scenario where only one communication step is allowed between neighbors before they update their local estimates.
The proposed algorithm is robust against naive agents and does not require any global parameters.
{The sufficient condition for bounded approximated local estimate error covariance is formulated.}
The comparisons with existing algorithms are shown.

\section{Preliminaries}
\subsection{Notations}
Through this paper, we use the following notations.
$\onebf$ is an all-one vector with appropriate dimension.
$\zerobf$ is an all-zero matrix with appropriate dimension.
$\Id_n$ is the $n\times n$ identity matrix.
$\Emb[\cdot]$ denotes the expectation of a random variable.
$\Zmb^*$ is the set of non-negative integers.
$\Zmb^+$ is the set of positive integers.
For any arbitrary square matrices $X_1$ and $X_2$ with the same dimensions,
$X_1\succeq X_2$ (respectively, $X_1\succ X_2$) implies that $X_1-X_2$ is a positive semi-definite (respectively, positive definite) matrix.
$X_1\geq X_2$ (respectively, $X_1> X_2$) implies that $X_1-X_2$ is a non-negative (respectively, positive) matrix.
For an arbitrary matrix $X$, $\{X\}_{(i,j)}$ is the entry on its $i$th row, $j$th column.
$\Tr(X)$ is the trace of $X$.
When we refer $X\succ\zerobf$ as a large/small matrix, we mean $X$ is large/small in the sense of the mean squared error (MSE).
If $X$ is nonsingular,
$X^{-n\top}\triangleq ((X^n)^{-1})^\top$ for some $n\in\Zmb^{*}$.

\subsection{Graph Theory}

A directed graph $\mathcal{G(V,E)}$ is used to represent the communication topology of a large-scale sensor network, where $\mathcal{V}\triangleq\{1,\cdots,N\}$ and $\mathcal{E\subseteq V\times V}$ are, respectively, the set of $N$ vertices that stands for the local agents, and the set of edges that stands for the communication channels.
Several basic concepts of graph theory used later in this paper are briefly listed here.
An \emph{edge} $(i,j)\in\mathcal{E}$ denotes that agent $j$ can receive information from agent $i$.
If a graph is \emph{undirected}, $(i,j)\in\mathcal{E}$ implies that $(j,i)\in\mathcal{E}$.
A \emph{directed path} from agent ${i_0}$ to agent ${i_\ell}$ is a sequence of vertices ${i_0},{i_1},\cdots,{i_\ell}$ such that $({i_{j-1}},{i_{j}})\in\mathcal{E}$ for $0<j\leq \ell$.
A directed (respectively, undirected) graph is \emph{strongly connected} (respectively, \emph{connected}) if there exists at least one path from every vertex to every other vertex.
$N_i\triangleq\{j|(j,i)\in\mathcal{E},\ \forall j\neq i\}$ is the \emph{neighborhood} of agent $i$.
$J_i\triangleq N_i\cup\{i\}$ is the \emph{inclusive neighborhood} of agent $i$.
A graph is \emph{complete} if $j\in N_i$ for any $i$ and $j$.
The \emph{in-degree} of a certain agent $i$ is defined as $\Delta_i \triangleq |N_i|$.
The \emph{maximum in-degree} of a graph is defined as $\Delta_\textrm{max} \triangleq \max_{i\in\{1,\ldots,N\}} \Delta_i$.
A \emph{directed spanning tree} is a subgraph of $\mathcal{G(V,E)}$ such that this subgraph is a directed tree and contains all vertices of $\mathcal{G(V,E)}$.
For a graph $\mathcal{G(V,E)}$, an associated \emph{stochastic matrix} $D$ is defined as follows.
For each $i\in\Vcal$, $\{D\}_{(i,j)}>0$ if $j\in J_i$ and $\{D\}_{(i,j)}=0$ otherwise.
Moreover, $\sum_{j}\{D\}_{(i,j)}=1$.
If it also holds that $\sum_{i}\{D\}_{(i,j)}=1,\ \forall j$, then $D$ is a \emph{doubly stochastic matrix}.
\begin{definition}[Induced family of stochastic matrices]
Let $D$ be a stochastic matrix associated with the graph $\mathcal{G(V,E)}$.
Then the family of stochastic matrices induced by $D$ is the set of all possible stochastic matrices associated with $\mathcal{G(V,E)}$.
\end{definition}

\subsection{Discrete-time Average Consensus Algorithm}


An average consensus algorithm computes the global average of some variables of interest, through communication with only local neighbors.
Suppose that $\mathbf{a}_i[0]$ is the initial value of agent $i$'s local variable.
The objective is to compute $\frac{1}{N} \sum_{i=1}^N\mathbf{a}_i[0]$ in a distributed manner.
Suppose that the graph is undirected, the local algorithm implemented at agent $i$ has the form of
$\mathbf{a}_i[k+1]=\mathbf{a}_i[k] + \epsilon\sum_{j\in N_i}(\mathbf{a}_j[k] -\mathbf{a}_i[k] ), \ \forall k\in\Zmb^*,$
where $\epsilon\in(0,1/\Delta_{\textrm{max}})$ is the rate parameter.
By iteratively updating the local variable value in this manner, the global average can be asymptotically achieved if and only if the graph is connected.
More details about the average consensus algorithm can be found in \cite{SaberFaxMurray07_IEEE}.

\subsection{Linear Information Fusion}

Consider a set of $p$ pieces of information sources, denoted as $\{a_i\}^{p}_{i=1}$,
where each $a_i\in\Rmb^{m_i}$ is the addition of a quantity linearly related to the same unknown parameter of interest $\alpha\in\Rmb^n$, and some unknown noise/errors with zero mean.
That is, $\Emb[a_i]=C_i\alpha$ with $C_i\in\Rmb^{m_i\times n},\ \forall i$.
For each $a_i$, assume that the noise/error covariance, denoted as $\tilde{R}_{a_i}\triangleq\Emb[(a_i-\Emb[a_i])(a_i-\Emb[a_i])^\top]$, is positive definite and might be unknown, but a \emph{consistent} (\cite{jazwinski1970stochastic,JulierUhlmann97_CI}) approximation, denoted as ${R}_{a_i}$, is known, i.e., $R_{a_i}\succeq\tilde{R}_{a_i}\succ\zerobf$.

The objective is to linearly fuse $\{a_i\}^{p}_{i=1}$ and obtain an unbiased consistent estimate of $\alpha$, i.e., $\hat{\alpha} = \sum_{i=1}^{p} K_i a_i$,
where $K_i\in\Rmb^{n \times m_i},\ \forall i$ are weighting matrices to be selected.
Note that $\sum_{i=1}^{p} K_i C_i = \Id_n$ due to the required unbiasedness of $\hat{\alpha}$.
Let $\tilde{R}_{\hat{\alpha}}\triangleq\Emb[(\hat{\alpha}-\Emb[\hat{\alpha}])(\hat{\alpha}-\Emb[\hat{\alpha}])^\top]$
be the true error covariance of $\hat{\alpha}$.
Let $\tilde{R}_{a_{ij}}\triangleq\Emb[(a_i-\Emb[a_i])(a_j-\Emb[a_j])^\top]$.
It follows that 
\begin{equation}\label{Equ:GenFusedCov}
\tilde{R}_{\hat{\alpha}} =  \sum_{i} K_i \tilde{R}_{a_i} K_i^\top +   \sum_{i}   \sum_{j \neq i} K_i \tilde{R}_{a_{ij}} K_j^\top.
\end{equation}
When $\tilde{R}_{a_{i}}$ and $\tilde{R}_{a_{ij}}$ are unknown, an approximation of $\tilde{R}_{\hat{\alpha}}$,
is required.
This is when the problem becomes tricky.
On one hand, the consistency of the estimate has to be guaranteed since an inconsistent estimate is over-confident and the corresponding approximated error covariance does not realistically implies the uncertainty \cite{drummond2006target}.
On the other hand, as the error covariance is a measure of estimate uncertainty.
For a consistent estimate, a smaller approximated error covariance indicates more confidence.
In the rest of this paper, we assume that when fusing information sources, the objective is to minimize the trace of the fused error covariance.
Two situations are considered.

\subsubsection{Fusion of information with uncorrelated errors}

When the information sources have mutually uncorrelated errors, i.e., $\tilde{R}_{a_{ij}}=\zerobf,\ \forall i,j$, it follows from \eqref{Equ:GenFusedCov} that
$\tilde{R}_{\hat{\alpha}} =  \sum_{i} K_i \tilde{R}_{a_i} K_i^\top$.
Thus a reasonable approximation of $\tilde{R}_{\hat{\alpha}}$ could be
$R_{\textrm{u}}\triangleq\sum_{i} K_i {R}_{a_i} K_i^\top$.
Note that if $R_{a_i}\succeq\tilde{R}_{a_i},\ \forall i$, it is guaranteed that $R_{\textrm{u}}\succeq \tilde{R}_{\hat{\alpha}}$.
The optimal $R_{\textrm{u}}$ and $\hat{\alpha}$, denoted as, respectively, $R_{\textrm{u}}^*$ and $\hat{\alpha}^*$, can be obtained by minimizing $\Tr (R_{\textrm{u}})$ with $\sum_{i} K_i C_i = \Id_n$ being held to guarantee the unbiasedness.
\begin{lemma}
\label{Lemma:FusionOfIndepInfo}
Let $K_i^*$ be the optimal weighting matrices such that $\Tr (\sum_{i} K_i {R}_{a_i} K_i^\top)$ is minimized with $\sum_{i} K_i C_i = \Id_n$ satisfied.
Let $\hat{\alpha}^*\triangleq \sum_i K_i^* a_i$ and $R_{\textrm{u}}^*\triangleq K_i^* {R}_{a_i} {K_i^*}^\top$.
Then, $K_i^* = ( C_i^\top R_{a_i}^{-1} C_i)^{-1}C_i^\top R_{a_i}^{-1},\ \forall i$ and
\begin{equation}\label{Equ:FusionIndep}
R_{\textrm{u}}^* = (\sum_i C_i^\top R_{a_i}^{-1} C_i)^{-1},\quad
\hat{\alpha}^* = R_{\textrm{u}}^*(\sum_i C_i^\top R_{a_i}^{-1} a_i).
\end{equation}
\end{lemma}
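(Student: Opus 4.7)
The plan is to solve this constrained minimization via Lagrange multipliers. The objective $\Tr(\sum_i K_i R_{a_i} K_i^\top)$ is strictly convex in the tuple $(K_1,\ldots,K_p)$ because each $R_{a_i}\succ\zerobf$, while the unbiasedness constraint $\sum_i K_i C_i = \Id_n$ is affine; hence the first-order stationarity conditions are both necessary and sufficient for the global optimum, and no second-order check is needed.

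First, I would introduce a matrix Lagrange multiplier $\Lambda\in\Rmb^{n\times n}$ for the $n\times n$ matrix constraint and form the Lagrangian
\begin{equation*}
\mathcal{L}(\{K_i\},\Lambda) = \Tr\Bigl(\sum_i K_i R_{a_i} K_i^\top\Bigr) + \Tr\Bigl(\Lambda^\top\Bigl(\sum_i K_i C_i - \Id_n\Bigr)\Bigr).
\end{equation*}
Applying standard matrix calculus (using symmetry of $R_{a_i}$), the stationarity condition $\partial\mathcal{L}/\partial K_i = \zerobf$ becomes $2 K_i R_{a_i} + \Lambda C_i^\top = \zerobf$, giving $K_i = -\tfrac{1}{2}\Lambda C_i^\top R_{a_i}^{-1}$ for each $i$.

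Next, I would substitute this expression into the constraint to obtain $-\tfrac{1}{2}\Lambda\bigl(\sum_i C_i^\top R_{a_i}^{-1} C_i\bigr) = \Id_n$, and solve for the multiplier as $\Lambda = -2\bigl(\sum_i C_i^\top R_{a_i}^{-1} C_i\bigr)^{-1}$. Back-substituting yields the optimal gains $K_i^* = \bigl(\sum_j C_j^\top R_{a_j}^{-1} C_j\bigr)^{-1} C_i^\top R_{a_i}^{-1}$. Finally, plugging $K_i^*$ into $R_{\textrm{u}} = \sum_i K_i R_{a_i} K_i^\top$ and exploiting the cancellation $R_{a_i}^{-1} R_{a_i} R_{a_i}^{-1} = R_{a_i}^{-1}$ collapses the sum to the claimed closed form $R_{\textrm{u}}^* = (\sum_i C_i^\top R_{a_i}^{-1} C_i)^{-1}$, and $\hat{\alpha}^* = \sum_i K_i^* a_i = R_{\textrm{u}}^* \sum_i C_i^\top R_{a_i}^{-1} a_i$ follows immediately, matching \eqref{Equ:FusionIndep}.

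The one technical point requiring care is the invertibility of $S \triangleq \sum_i C_i^\top R_{a_i}^{-1} C_i$, which is needed for $\Lambda$ to be well-defined. Since each $R_{a_i}^{-1}\succ\zerobf$, $S$ is positive definite precisely when the stacked observation matrix $[C_1^\top,\ldots,C_p^\top]^\top$ has full column rank, which is the natural identifiability assumption implicit in estimating $\alpha\in\Rmb^n$ from $\{a_i\}$; otherwise the constraint set itself would be empty or degenerate. Beyond this invertibility check and the convexity-based justification that the stationary point is the unique global minimum, the remainder is routine matrix algebra, so I do not anticipate any substantial obstacle.
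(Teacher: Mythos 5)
Your proof is correct and follows essentially the same Lagrange-multiplier route as the paper's; the opposite sign convention on $\Lambda$ and your explicit remarks on convexity (stationarity being sufficient) and on the invertibility of $\sum_i C_i^\top R_{a_i}^{-1} C_i$ are the only, harmless, differences. Note that the gain you derive, $K_i^* = (\sum_j C_j^\top R_{a_j}^{-1} C_j)^{-1} C_i^\top R_{a_i}^{-1}$, is the correct one --- it agrees with the paper's own proof and with \eqref{Equ:FusionIndep} --- whereas the formula $K_i^* = (C_i^\top R_{a_i}^{-1} C_i)^{-1} C_i^\top R_{a_i}^{-1}$ printed in the lemma statement is a typo, since it would give $\sum_i K_i^* C_i = p\,\Id_n$ and violate the unbiasedness constraint.
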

\begin{proof}
Let $\Lambda \in \Rmb^{n\times n}$ be the Lagrangian multiplier.
Construct the following augmented function
\begin{equation*}
L(K_1,\cdots,K_p,\Lambda)\triangleq \Tr \sum_{i} K_i {R}_{a_i} K_i^\top + \sum_{j,k}\{\Lambda(\sum_{i} K_i C_i -\Id_n)\}_{(j,k)}.
\end{equation*}
By setting the partial derivative with respect to $\Lambda$ and each $K_i$ to zero, it follows that $K_i^*=\frac{1}{2}\Lambda^* C_i^\top R_{a_i}^{-1},\ \forall i$ and $\Lambda^* = 2(\sum_{i} C_i^\top {R}_{a_i}^{-1} C_i)^{-1}$.
Accordingly, $\hat{\alpha}^*$ and $R_{\textrm{u}}^*$ can be obtained by their definitions.
\end{proof}


In summary, the fusion steps given by \eqref{Equ:FusionIndep} preserve the consistency, and yield the optimal estimate.
Note that if ${R}_{a_i}=\tilde{R}_{a_i}, \forall i$, one can obtain $R^*_u=\tilde{R}_{\hat{\alpha}}$.
However, this only holds when the information sources have mutually uncorrelated errors.

\subsubsection{Fusion of information with unknown correlated errors}
When the information sources have unknown but correlated errors, the last term in \eqref{Equ:GenFusedCov} is nonzero and unknown.
In such a case, if one approximates $\tilde{R}_{\hat{\alpha}}$ with  $R_{\textrm{u}}$ as in the previous case, it is not guaranteed that $R_{\textrm{u}}\succeq\tilde{R}_{\hat{\alpha}}$ even if $R_{a_i}\succeq\tilde{R}_{a_i},\ \forall i$.
The {Covariance Intersection} (CI) algorithm is proposed in \cite{JulierUhlmann97_CI},
where the estimate and approximated error covariance of $\alpha$, denoted as, respectively, $\hat{\alpha}_{\textrm{c}}$ and $R_{\textrm{c}}$, are as follows:
\begin{equation}\label{Equ:FusionCI}
R_{\textrm{c}}= (\sum_i \omega_i C_i^\top R_{a_i}^{-1}C_i)^{-1},\
\hat{\alpha}_{\textrm{c}} = R_{\textrm{c}}(\sum_i \omega_i C_i^\top R_{a_i}^{-1}a_i),
\end{equation}
where $\omega_i>0,\ \forall i$ are the weights satisfy $\sum_i \omega_i = 1$.
The CI algorithm is able to preserve the consistency of the information sources (see \cite{JulierUhlmann97_CI} for detailed proofs), for any possible unknown correlated errors.
However, the optimality is traded in exchange for guaranteeing the consistency.
As a result, the CI algorithm could be too conservative especially when the information sources are less correlated.

\section{Problem formulation}

Through this paper, we use the sub-index $k\in\Zmb^+$ to denote each variable at the $k$th time instant.
Consider the following linear time-invariant (LTI) dynamic system
\begin{equation}\label{Equ:SysDynamics}
x_{k+1}=F x_{k} + B w_k,
\end{equation}
where $x_k\in\Rmb^n$ is the state of interest.
$F\in\Rmb^{n\times n}$ is the state transition matrix.
$B\in\Rmb^{n\times p}$ is a full rank matrix that models the process noise $w_k\in\Rmb^{p}$, which is assumed to be white Gaussian, i.e., $w_k\sim\mathcal {N}(0,Q)$.
Here $Q\in\Rmb^{p\times p}$ is the covariance of the process noise and it is assumed that $Q\succ \zerobf$.
Moreover, it is assumed that $\Emb[x_{k'}w_k^\top] = \zerobf$, $\forall k'\in\Zmb^*$ and $k>k'$.

Suppose that each agent $i$, $i\in\Vcal$, is able to obtain a local measurement $z_{i,k}\in\Rmb^{m_i}$, which follows a LTI sensing model:
\begin{equation}\label{Equ:LocalSenModel}
z_{i,k} = H_i x_k + v_{i,k},
\end{equation}
where $H_i\in\Rmb^{m_i\times n}$ is the local observation matrix.
$v_{i,k}$ is the local measurement noise and assumed to be white Gaussian, i.e., $v_{i,k}\sim\mathcal {N}(0,R_{i})$ with $R_{i}\in\Rmb^{m_i\times m_i}$ and $R_i\succ \zerobf$ being the covariance of the measurement noise.

In large-scale sensor networks, it is common to encounter situations where the target of interest is not directly observed by a subset of local agents.
The following assumption is made through this paper.
\begin{assumption}\label{Asump:NaiveAgentInfUncertainty}
If the target of interest is not observed by agent $i$ at time instant $k$,
then $R_i^{-1}=\zerobf$ and $z_{i,k}$ is arbitrary.
\end{assumption}

Assumption \ref{Asump:NaiveAgentInfUncertainty} essentially states that, an agent observing no target has infinite uncertainty about its measurement.
As $z_{i,k}$ can be arbitrary for such an agent, one can regard either $H_i$ or $v_{i,k}$ as arbitrary.
From this point, to simplify the statement of the observability conditions, it is assumed that $H_i=\zerobf$ if the target is not observed by agent $i$ at time $k$.
The naive agent can be therefore defined as follows.
\begin{definition}[Naive agent]\label{Def:NaiveAgent}
Let $\Nfrak\subset\Vcal$ be the set of vertices corresponding to all naive agents in the network.
Then $i\in\Nfrak$ if $(F,\textrm{col}\{H_j\}_{j\in J_i})$ is not an observable pair.
\end{definition}

In other words, an agent is naive as long as there exists one state component which cannot be recovered from the collective measurements in its inclusive neighborhood.
In such a case, the agent is actually only naive with respect to this state component.
In the rest of this paper, for simplicity, we focus on naive agents that can recover none of the state components.
In the case where naivety is discussed with respect to some state components, the same conclusions hold for the corresponding subspaces of the state space.

{
The objective is formulated as follows.
Each agent is equipped with a local sensor formulated by \eqref{Equ:LocalSenModel}.
The target makes uses of both the local information sources, and the information obtained from its neighbors through communication to come up with a consistent estimate of the dynamic state in \eqref{Equ:SysDynamics}, with as much confidence as possible.
The estimation process should not depend on any global information.}
\section{Distributed Estimation with Hybrid Information Fusion Strategy}

Suppose that at time instant $k$, each agent is able to take its local measurement $z_{i,k}$.
Besides $z_{i,k}$, each agent has a consistent prior estimate of the state of interest, denoted as $\xib$.
The local prior estimate error and approximated error covariance are defined as, respectively, $\eib\triangleq\xib-x_k$ and $\Pib\succeq \tilde{P}_{i,k|k-1}$,
where $\tilde{P}_{i,k|k-1}\triangleq\Emb[\eib\eib^\top]$ is the true error covariance of $\xib$.
The following assumptions are made through this paper.
\begin{assumption}\label{Asump:UncorrelatedViVj_ETAiVj}
For each $k\in\Zmb^+$:
(\ref{Asump:UncorrelatedViVj_ETAiVj}.1) All local measurement noises are mutually uncorrelated, i.e., $\Emb[v_{i,k} v_{j,k}^\top]=\zerobf,\ \forall i\neq j$;
(\ref{Asump:UncorrelatedViVj_ETAiVj}.2) Any local prior estimate error is uncorrelated with any local measurement noise, $\Emb[\eib v_{j,k}^\top]=\zerobf,\ \forall i,j$.
\end{assumption}


\subsection{Distributed Hybrid Information Fusion (DHIF)}
Now we discuss the strategy to fuse all information sources available to agent $i$, at $k\in\Zmb^+$.
The proposed strategy is a hybrid of two fusion steps.

\textbf{Step 1}.
Fuse the prior estimates obtained from all agents in the inclusive neighborhood, i.e., $\{\xjb\}_{j\in J_i}$, and obtain an intermediate estimate and its approximated error covariance, denoted as, respectively, $\check{x}_{i,k}$ and $\check{P}_{i,k}$.
Since the correlations between each pair of the prior estimate errors in the inclusive neighborhood are neither known nor negligible, the CI algorithm \cite{JulierUhlmann97_CI} is used for this step.
Let $\Id_n$, $\Pjb$, $\xjb$, $\forall j\in J_i$, $\check{P}_{i,k}$ and $\check{x}_{i,k}$ play, respectively, the roles of $C_i$, $R_{a_i}$, $a_i$, $\forall i\in\{1,\cdots,p\}$, $R_{\textrm{c}}$ and $\hat{\alpha}_{\textrm{c}}$ in \eqref{Equ:FusionCI}.
It follows that
$\check{P}_{i,k}^{-1} = \sum_{j\in J_i} \omega^{(k)}_{ij} \Pjb^{-1},\
\check{P}_{i,k}^{-1}\check{x}_{i,k} = \sum_{j\in J_i} \omega^{(k)}_{ij} \Pjb^{-1}\xjb,$
where $\omega^{(k)}_{ij}\in\Rmb^+$ is the weight that agent $i$ assigns to the information received from agent $j$, at time instant $k$.
The selection of $\omega^{(k)}_{ij}$ will be discussed in detail in Section \ref{Sec:ImplementDHIF:Wights}.
Note that the CI algorithm requires that $\sum_{j\in J_i}\omega^{(k)}_{ij}=1$ for any given $i$ or $k$.

\textbf{Step 2}. Fuse the intermediate estimate obtained from Step 1 (i.e., $\check{x}_{i,k}$) with all measurements obtained from the inclusive neighborhood (i.e., $\{z_j\}_{j\in J_i}$), and obtain the posterior state estimates and its approximated error covariance.
The conclusion of the following lemma is necessary for this step.
\begin{lemma}\label{Lemma:UncorrelatedEverything}
Under Assumptions (\ref{Asump:UncorrelatedViVj_ETAiVj}.1) and (\ref{Asump:UncorrelatedViVj_ETAiVj}.2), $\{v_{j,k}\}_{j\in J_i}$ and $(\check{x}_{i,k}-x_k)$ are mutually uncorrelated, $\forall i\in\Vcal$.
\end{lemma}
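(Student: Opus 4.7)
The plan is to express $\check{x}_{i,k}-x_k$ explicitly as a linear combination of the prior estimate errors $\{\hat{\eta}_{j,k|k-1}\}_{j\in J_i}$, and then invoke Assumption (\ref{Asump:UncorrelatedViVj_ETAiVj}.2) together with linearity of expectation to conclude.

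First I would start from the formula produced in Step 1 of the DHIF algorithm, namely $\check{P}_{i,k}^{-1}\check{x}_{i,k}=\sum_{j\in J_i}\omega^{(k)}_{ij}\Pjb^{-1}\xjb$, and multiply through by $\check{P}_{i,k}$ to get $\check{x}_{i,k}=\check{P}_{i,k}\sum_{j\in J_i}\omega^{(k)}_{ij}\Pjb^{-1}\xjb$. The crucial identity is the companion equation $\check{P}_{i,k}^{-1}=\sum_{j\in J_i}\omega^{(k)}_{ij}\Pjb^{-1}$, which implies $\check{P}_{i,k}\sum_{j\in J_i}\omega^{(k)}_{ij}\Pjb^{-1}=\Id_n$. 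Writing $\xjb=\hat{\eta}_{j,k|k-1}+x_k$ and using this identity to cancel the $x_k$ terms, I would obtain the clean representation
\begin{equation*}
\check{x}_{i,k}-x_k \;=\; \check{P}_{i,k}\sum_{j\in J_i}\omega^{(k)}_{ij}\Pjb^{-1}\hat{\eta}_{j,k|k-1},
\end{equation*}
so that $\check{x}_{i,k}-x_k$ is a deterministic-coefficient linear combination of the prior estimate errors in the inclusive neighborhood.

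Next I would fix an arbitrary $j\in J_i$ and compute $\Emb[(\check{x}_{i,k}-x_k)v_{j,k}^\top]$ by pushing the expectation through the finite sum. Each term reduces to $\check{P}_{i,k}\omega^{(k)}_{i\ell}\Pcal_{\ell,k|k-1}^{-1}\Emb[\hat{\eta}_{\ell,k|k-1}v_{j,k}^\top]$ for $\ell\in J_i$, which vanishes by Assumption (\ref{Asump:UncorrelatedViVj_ETAiVj}.2). Since all these cross-expectations are zero, the whole sum is $\zerobf$, which is the desired conclusion.

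There is no real obstacle: the only mild subtlety is recognizing that the CI weights must satisfy the consistency identity $\sum_{j\in J_i}\omega^{(k)}_{ij}\Pjb^{-1}=\check{P}_{i,k}^{-1}$ in order to eliminate $x_k$ from the expression and reduce $\check{x}_{i,k}-x_k$ to a pure linear combination of prior errors. Once that rewrite is in place, the result follows by linearity of expectation together with Assumption (\ref{Asump:UncorrelatedViVj_ETAiVj}.2), and Assumption (\ref{Asump:UncorrelatedViVj_ETAiVj}.1) is not actually needed for this particular lemma (though it will be used downstream when fusing the measurements in Step 2).
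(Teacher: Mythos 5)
Your argument is essentially identical to the paper's proof: the authors also write $x_k=\check{P}_{i,k}\check{P}_{i,k}^{-1}x_k$, use the CI identity $\check{P}_{i,k}^{-1}=\sum_{j\in J_i}\omega^{(k)}_{ij}\Pjb^{-1}$ to reduce $\check{x}_{i,k}-x_k$ to $\check{P}_{i,k}\sum_{j\in J_i}\omega^{(k)}_{ij}\Pjb^{-1}\ejb$, and then kill each cross-term via Assumption (\ref{Asump:UncorrelatedViVj_ETAiVj}.2). Your closing remark that Assumption (\ref{Asump:UncorrelatedViVj_ETAiVj}.1) is unnecessary slightly mis-scopes the lemma: since the claim (as used in Step 2) is that all $|J_i|+1$ sources are \emph{pairwise} uncorrelated, the paper invokes (\ref{Asump:UncorrelatedViVj_ETAiVj}.1) precisely to cover the $v_{j,k}$-versus-$v_{l,k}$ pairs before reducing the remaining work to the computation you give.
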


\begin{proof}
With Assumption (\ref{Asump:UncorrelatedViVj_ETAiVj}.1) satisfied, it suffices to show  that for each $l\in J_i$, $\Emb[(\check{x}_{i,k}-x_k) v_{l,k}^\top]=\zerobf$.
This can be verified as
\begin{equation*}
\begin{split}
\Emb[(\check{x}_{i,k}-x_k) v_{l,k}^\top]
&=\Emb[(
\check{P}_{i,k}\sum_{j\in J_i} \omega^{(k)}_{ij} \Pjb^{-1}\xjb\\
-\check{P}_{i,k}\check{P}_{i,k}^{-1}x_k) v_{l,k}^\top]
&=\check{P}_{i,k}\sum_{j\in J_i} \omega^{(k)}_{ij}\Pjb^{-1} \Emb[\ejb v_{l,k}^\top]=\zerobf,
   \end{split}
\end{equation*}
where Assumption (\ref{Asump:UncorrelatedViVj_ETAiVj}.2) is used for the last equality.
\end{proof}

As shown by Lemma \ref{Lemma:UncorrelatedEverything}, all $|J_i|+1$ information sources to fuse have mutually uncorrelated errors.
Therefore, Eq. \eqref{Equ:FusionIndep} can be applied.
Without loss of generality, let $H_j$, $R_j$ and $z_{j,k},\ \forall j\in J_i$ play, respectively, the roles of $C_i$, $R_{a_i}$ and $a_i,\ \forall i\in\{1,\cdots,p-1\}$ in \eqref{Equ:FusionIndep}, let $\Id_n$, $\check{P}_{i,k}$, $\check{x}_{i,k}$, $\Pjh$ and $\xjh$ play, respectively, the roles of $C_p$, $R_{a_p}$, $a_p$, $R_{\textrm{u}}^*$ and $\hat{\alpha}^*$ in \eqref{Equ:FusionIndep}, with $\check{P}_{i,k}$ and $\check{x}_{i,k}$ obtained from the previous step, it follows that
\begin{equation}\label{Equ:Correction}
\begin{split}
\Pih&=(\sum_{j\in J_i}\omega^{(k)}_{ij} \Pjb^{-1}+ \sum_{j\in J_i} H_j^\top R_j^{-1}H_j)^{-1},\\
\xih&= \Pih(\sum_{j\in J_i} \omega^{(k)}_{ij} \Pjb^{-1}\xjb + \sum_{j\in J_i} H_j^\top R_j^{-1}z_j).
\end{split}
\end{equation}
We refer \eqref{Equ:Correction} as the update steps of the proposed algorithm, named Distributed Hybrid Information Fusion (DHIF), implemented by agent $i$ at time instant $k$.
Here $\xih$ and $\Pih$ in \eqref{Equ:Correction} are, respectively, the local posterior estimate and the approximated local posterior error covariance by the DHIF.
Define the posterior estimate error and the true posterior error covariance as, respectively, $\eih\triangleq\xih-x_k$ and
$\tilde{P}_{i,k|k}\triangleq\Emb[\eih\eih^\top]$.
The consistency is preserved in this 2-step fusion process, as stated in the following lemma.
\begin{lemma}\label{Lemma:CorrectionStepConsistency}
Under Assumptions (\ref{Asump:UncorrelatedViVj_ETAiVj}.1) and (\ref{Asump:UncorrelatedViVj_ETAiVj}.2),
the update steps \eqref{Equ:Correction} preserve the consistency, i.e., $\Pih\succeq\tilde{P}_{i,k|k}$ if $\{z_{j,k}\}_{j\in J_i}$ and $\{\xib\}_{j\in J_i}$ are all consistent information sources.
\end{lemma}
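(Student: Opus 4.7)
The plan is to decompose the update \eqref{Equ:Correction} into the two constituent fusion steps described just above the lemma and then invoke a distinct consistency guarantee for each. Concretely, I would first argue consistency of the intermediate estimate produced by the CI fusion in Step 1, and then consistency of the final posterior produced by the uncorrelated-error fusion in Step 2.

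For Step 1, I would directly invoke the consistency property of Covariance Intersection \cite{JulierUhlmann97_CI}: since each prior $\hat{x}_{j,k|k-1}$ is consistent and the weights $\omega^{(k)}_{ij}$ are nonnegative and sum to one over $j\in J_i$, the intermediate quantity $(\check{x}_{i,k},\check{P}_{i,k})$ defined in Step 1 satisfies $\check{P}_{i,k}\succeq\Emb[(\check{x}_{i,k}-x_k)(\check{x}_{i,k}-x_k)^\top]$, regardless of any unknown cross-correlations among the $\{\hat{x}_{j,k|k-1}\}_{j\in J_i}$.

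For Step 2, I would interpret \eqref{Equ:Correction} as the optimal linear fusion of the $|J_i|+1$ information sources $\check{x}_{i,k}$ and $\{z_{j,k}\}_{j\in J_i}$, as prescribed by Lemma \ref{Lemma:FusionOfIndepInfo}. The mutual uncorrelatedness required by that lemma is ensured: between distinct measurement noises by Assumption (\ref{Asump:UncorrelatedViVj_ETAiVj}.1), and between $\check{x}_{i,k}-x_k$ and each $v_{j,k}$ by Lemma \ref{Lemma:UncorrelatedEverything}. What then remains is to verify that the fused covariance inherits consistency from its inputs.

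The main obstacle, and the heart of the argument, is this last verification. Let $K_{\check{x}}^{*}$ and $\{K_j^{*}\}_{j\in J_i}$ denote the optimal weighting matrices implicit in \eqref{Equ:Correction}. By unbiasedness and uncorrelatedness, the true posterior covariance admits the expansion $\tilde{P}_{i,k|k} = K_{\check{x}}^{*}\,\Emb[(\check{x}_{i,k}-x_k)(\check{x}_{i,k}-x_k)^\top]\,K_{\check{x}}^{*\top} + \sum_{j\in J_i} K_j^{*}\,\Emb[v_{j,k}v_{j,k}^\top]\,K_j^{*\top}$, whereas $\Pih$ is given by the analogous expression with $\check{P}_{i,k}$ and $R_j$ replacing the true covariances. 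Subtracting, the difference $\Pih-\tilde{P}_{i,k|k}$ becomes a sum of congruences of positive semi-definite matrices, using Step 1's conclusion together with $R_j\succeq \tilde{R}_j$. Modulo the algebraic bookkeeping to identify the $K^{*}$'s with the implicit weights in \eqref{Equ:Correction}, the desired $\Pih\succeq\tilde{P}_{i,k|k}$ follows.
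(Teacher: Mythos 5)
Your proposal is correct and follows essentially the same route as the paper: the paper's own proof is a one-line observation that both Step 1 (CI) and Step 2 (uncorrelated-error fusion) individually preserve consistency, which is exactly your decomposition. You simply make explicit the covariance expansion and the role of Lemma \ref{Lemma:UncorrelatedEverything} that the paper leaves implicit.
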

\begin{proof}
The fact that $\xih$ given by \eqref{Equ:Correction} preserves consistency can be directly observed because both Step 1 and Step 2 are fusion strategies that preserve the consistency.
\end{proof}



\subsection{Weights Selection for CI Algorithm}\label{Sec:ImplementDHIF:Wights}
Because the CI algorithm is used in Step 1, the update steps in \eqref{Equ:Correction} only give a suboptimal estimate.
As naive agents could come up with estimates with very low confidence, if relatively high weights are assigned to such estimates, the intermediate estimate obtained by Step 1 will become less confident.
Therefore, the selection of the weights for the embedded CI algorithm in Step 1 should be determined carefully.
This is especially important in the scenario where the agents are allowed to communicate only once with their neighbors in between, as the advantages of asymptotic properties brought by the consensus-based algorithms are lost in such a scenario.
The optimal set of $\{\omega_{ij}^{(k)}\}_{j\in J_i}$ can be obtained by minimizing $\Tr ( \sum_{j\in J_i } \omega_{ij}^{(k)} \Pjb^{-1})^{-1}$,
which can be casted as the following Semi-definite Programming (SDP) problem \cite{Boyd2004convex}:
\begin{equation}\label{Equ:OptimalWeightsSDP}
\begin{split}
&\underset{u}{\textrm{minimize}}\ u^\top\onebf, \quad\quad\quad\ \\
&\textrm{subject to}
\sum_{j\in J_i} \omega_{ij}^{(k)} =1,\
0<\omega_i\leq\omega_{ij}^{(k)}\leq 1,\ \forall j\in J_i,\\
&\left[
\begin{array}{cc}
\sum_{j\in J_i} \omega^{(k)}_{ij} \Pjb^{-1} & e_l\\
e_l^\top & u_{(l)}
\end{array}\right]\succeq\zerobf,
\ l=1,\cdots,n,
\end{split}
\end{equation}
where $\underline\omega_i$, $\forall i$ are sufficiently small constant lower bounds for selecting weights.
Here $e_l\in\Rmb^n$ is the canonical basis vector whose $l$th entry is one and $u\in\Rmb^n$.
Due to the convexity of the problem formulated in \eqref{Equ:OptimalWeightsSDP}, the set of weights can be determined efficiently.
In the case where solving \eqref{Equ:OptimalWeightsSDP} is still considered as computationally expensive, some suboptimal approximations can still be used \cite{Niehsen2002FastCovIntersec}.

\subsection{Recursive Form}
\begin{algorithm}\label{Algorithm}
\If{$k=1$}
{initializes $\hat{x}_{i,1|0}$ and $P_{i,1|0}$}
computes $\Xi_{i,k}\triangleq\Pib^{-1}$ and $\xi_{i,k}\triangleq\Pib^{-1}\xib$\\
takes the local measurement $z_{i,k}$\\
computes $S_{i,k}\triangleq H_i^\top R_i^{-1}H_i$ and $ y_{i,k}\triangleq H_i^\top R_i^{-1}z_{i,k}$\\
sends $S_{i,k}$, $y_{i,k}$, $\Xi_{i,k}$ and $\xi_{i,k}$ to agent $j$, $\forall j$ such that $i\in N_j$\\
receives $S_{j,k}$, $y_{j,k}$, $\Xi_{j,k}$ and $\xi_{j,k}$ from agent $j$, $\forall j\in N_i$\label{Alg:Step:InputMultiple}\\
selects the set of weights $\{\omega_{ij}^{(k)}\}_{j\in J_i}$\label{Alg:Step:ChooseWeights}\\
computes $\bar{\Xi}_{i,k}=\sum_{j\in J_i} \omega^{(k)}_{ij} \Xi_{i,k}$ and $
\bar{\xi}_{i,k}=\sum_{j\in J_i} \omega^{(k)}_{ij} \xi_{i,k}$\label{Alg:Step:CI}\\
computes $\bar{y}_{i,k}=\sum_{j\in J_i}{y}_{j,k}$ and $ \bar{S}_{i,k}=\sum_{j\in J_i}{S}_{j,k}$\label{Alg:Step:KF}\\
updates local estimate and approximated covariance
\begin{eqnarray}
\label{Equ:Alg:CovUpdate}
\Pih&=&\left(\bar{S}_{i,k} + \bar{\Xi}_{i,k}\right)^{-1}\\
\label{Equ:Alg:StateUpdate}
\xih&=& \Pih\left(\bar{y}_{i,k} + \bar{\xi}_{i,k} \right)
\end{eqnarray} \\\label{Alg:Update}
predicts local estimate and approximated covariance
\begin{eqnarray}
\label{Equ:Alg:CovPrd}
P_{i,k+1|k}&=&F \Pih F^\top + B Q B^\top\\
\label{Equ:Alg:StatePrd}
\hat{x}_{i,k+1|k}&=&F \xih
\end{eqnarray}
\caption{DHIF Implemented by Agent $i$ at Time $k$}
\end{algorithm}

With the prediction steps formulated based on the system dynamics in \eqref{Equ:SysDynamics}, the recursive form of the distributed hybrid information fusion (DHIF) strategy implemented by agent $i$ at time $k$ is summarized in Algorithm \ref{Algorithm}.

\begin{theorem}\label{Thm:Consistency}
Under Assumption \ref{Asump:UncorrelatedViVj_ETAiVj}, Algorithm \ref{Algorithm} preserves the consistency.
That is,
$\Pib\succeq\tilde{P}_{i,k|k-1}$ and $\Pih\succeq\tilde{P}_{i,k|k}$, $\forall k\in\Zmb^+$,
if the initialized local estimates $\{\hat{x}_{i,1|0}\}_{i=1}^{N}$ are consistent. That is,
 \begin{equation}\label{Equ:InitialCondition}
   P_{i,1|0}\succeq\Emb[(\hat{x}_{i,1|0}-x_1)(\hat{x}_{i,1|0}-x_1)^\top],\ \forall i.
 \end{equation}
\end{theorem}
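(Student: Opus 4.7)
The plan is to prove both statements jointly by induction on $k$, using Lemma \ref{Lemma:CorrectionStepConsistency} to handle the update step and a direct covariance calculation to handle the prediction step.

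For the base case $k=1$, the prior consistency $P_{i,1|0}\succeq\tilde{P}_{i,1|0}$ is exactly the hypothesis \eqref{Equ:InitialCondition}. Posterior consistency $\Pih\succeq\tilde{P}_{i,1|1}$ then follows from Lemma \ref{Lemma:CorrectionStepConsistency}, since the prior estimates $\{\hat{x}_{j,1|0}\}_{j\in J_i}$ are consistent by assumption and each measurement $z_{j,1}$ is trivially consistent (its true noise covariance equals the known $R_j$).

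For the inductive step, assume $\Pib\succeq\tilde{P}_{i,k|k-1}$ and $\Pih\succeq\tilde{P}_{i,k|k}$ for every $i$. Propagating through the prediction \eqref{Equ:Alg:StatePrd} gives
\begin{equation*}
\hat{x}_{i,k+1|k}-x_{k+1}=F\eih - Bw_k,
\end{equation*}
so the true prior error covariance at time $k+1$ expands as
\begin{equation*}
\tilde{P}_{i,k+1|k}=F\tilde{P}_{i,k|k}F^\top - F\Emb[\eih w_k^\top]B^\top - B\Emb[w_k\eih^\top]F^\top + BQB^\top.
\end{equation*}
The key observation is that $\eih$ is a deterministic function of $\{\hat{x}_{j,k|k-1}\}_{j\in J_i}$, $\{z_{j,k}\}_{j\in J_i}$, and the weights $\{\omega_{ij}^{(k)}\}_{j\in J_i}$, each of which is determined by quantities at times $\leq k$. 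Using the standing assumption $\Emb[x_{k'}w_k^\top]=\zerobf$ for $k'<k+1$ together with Assumption \ref{Asump:UncorrelatedViVj_ETAiVj}, these quantities are uncorrelated with $w_k$, so the cross terms vanish and $\tilde{P}_{i,k+1|k}=F\tilde{P}_{i,k|k}F^\top + BQB^\top$. Since the map $X\mapsto FXF^\top + BQB^\top$ preserves the Löwner order, the inductive hypothesis $\Pih\succeq\tilde{P}_{i,k|k}$ yields
\begin{equation*}
P_{i,k+1|k}=F\Pih F^\top + BQB^\top \succeq \tilde{P}_{i,k+1|k},
\end{equation*}
establishing prior consistency at time $k+1$. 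Posterior consistency $P_{i,k+1|k+1}\succeq\tilde{P}_{i,k+1|k+1}$ then follows by reapplying Lemma \ref{Lemma:CorrectionStepConsistency} to the now-consistent priors $\{\hat{x}_{j,k+1|k}\}_{j\in J_i}$ and the consistent measurements $\{z_{j,k+1}\}_{j\in J_i}$.

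The main subtlety lies in justifying the vanishing of the cross terms $\Emb[\eih w_k^\top]$ in the prediction step. This requires carefully tracking that every random quantity entering $\eih$ through the recursion \eqref{Equ:Alg:StateUpdate}, namely the collective prior errors $\{\ejb\}_{j\in J_i}$ and the collective measurement noises $\{v_{j,k}\}_{j\in J_i}$, is measurable with respect to information available no later than time $k$, so that the process noise $w_k$ driving the transition to $x_{k+1}$ remains uncorrelated with $\eih$ under Assumption \ref{Asump:UncorrelatedViVj_ETAiVj} and the whiteness of $w_k$. Everything else is a straightforward invocation of previously proved consistency lemmas together with monotonicity of the discrete Lyapunov map in the positive semi-definite order.
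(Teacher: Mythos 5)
Your proposal is correct and follows essentially the same route as the paper: an induction in which Lemma \ref{Lemma:CorrectionStepConsistency} handles the update step and the prediction step is handled by expanding $\tilde{P}_{i,k+1|k}$ and showing the cross terms $\Emb[\eih w^\top]B^\top$ vanish because $\eih$ depends only on quantities uncorrelated with the process noise driving the transition to $x_{k+1}$. The only difference is cosmetic (you index that noise as $w_k$, consistent with \eqref{Equ:SysDynamics}, while the paper writes $w_{k+1}$, and you spell out the base case more explicitly), so no substantive gap exists.
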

\begin{proof}
Lemma \ref{Lemma:CorrectionStepConsistency} has shown that the update steps in \eqref{Equ:Correction} preserve the consistency of the information.
Therefore, it suffices to show if $\Pih\succeq\tilde{P}_{i,k|k}$, one can obtain $P_{i,k+1|k}\succeq\tilde{P}_{i,k+1|k}$.
Note that $\tilde{P}_{i,k+1|k}
=F\tilde{P}_{i,k|k}F^\top + BQB^\top + \Omega_1 + \Omega_2$,
where $\Omega_1 = F\Emb[\eih w_{k+1}^\top]B^\top$ and $\Omega_2=\Omega_1^\top$.
Since $\eih$ in a linear combination of $\left\{ x_0,\{w_l\}^{k}_{l=1},\{v_l\}^{k}_{l=1} \right\}$, each of which is assumed to be uncorrelated with $w_{k+1}$, it follows that $\Emb[\eih w_{k+1}^\top]=\zerobf$.
Therefore, $\Omega_2=\Omega_1^\top=\zerobf$.
It follows from \eqref{Equ:Alg:CovUpdate} that if $\Pih\succeq\tilde{P}_{i,k|k}$,
$P_{i,k+1|k}=F \Pih F^\top + B Q B^\top\succeq F\tilde{P}_{i,k|k}F^\top + BQB^\top=\tilde{P}_{i,k+1|k}.$
Therefore, the prediction steps in Algorithm \ref{Algorithm} also preserve the consistency.
It follows that any estimate at any agent obtained by the recursive Algorithm \ref{Algorithm} is consistent,
if all local estimates initially fed to the algorithm are consistent, as formulated in \eqref{Equ:InitialCondition}.
\end{proof}
\begin{remark}
It is worth mentioning that \eqref{Equ:InitialCondition} can be easily satisfied in general.
The prior knowledge about the state of interest can be learned in an off-line manner before the fusion process.
In the worst case, each agent can simply choose $P_{i,1|0}^{-1}=\zerobf$, which indicates the infinite initial local uncertainty so that \eqref{Equ:InitialCondition} is satisfied.
\end{remark}


\section{Comparisons with Existing Algorithms}

In this section, we analyze the advantages of the proposed DHIF algorithm, and compare it with some existing algorithms in the literature.

\subsection{Robustness in Presence of Naive Agents}
Note that \eqref{Equ:Alg:StateUpdate}
can be written as
\begin{equation}\label{Equ:Update:ConsensusForm}
\begin{split}
\xih&=\xib
+ \Pih\left(\bar{y}_{i,k}-\bar{S}_{i,k}\xib\right)\\
&+\Pih\sum_{j\in J_i} \omega^{(k)}_{ij}\Pjb^{-1} (\xjb-\xib).
\end{split}
\end{equation}
As observed in \eqref{Equ:Update:ConsensusForm}, the approximated posterior estimate error covariance $\Pih$, which implies the uncertainty of agent $i$'s local estimate, is multiplied to both of the last two terms.
On one hand, if agent $i$ is more confident about its own local estimate, $\Pih$ will be small.
In such a situation, the effects on $\xih$ caused by the last two terms in \eqref{Equ:Update:ConsensusForm} will be attenuated by a small $\Pih$.
On the other hand, when $\Pih$ is large,
\eqref{Equ:Update:ConsensusForm} will push the local posterior estimate to match its neighbors' measurements and their local prior estimates,
as described from the following two perspectives.

\textbf{(1)}
Note that $\bar{y}_{i,k}-\bar{S}_{i,k}\xib=\sum_{j\in J_i}H_j^\top R_j^{-1}(z_{i,k}-H_j\xib)$,
which can be regarded as a weighted sum of the innovation terms that push $\xih$ to match $z_{j,k}$, $\forall j\in J_i$.
If a neighboring agent $j$ does not directly observe the state of interest, $R_j^{-1}=\zerobf$.
Thus, its bad effects on the update of $\xih$, caused by the erroneous local measurement $z_{j,k}$, will be eliminated in an automated manner.

\textbf{(2)}
Similarly, the last term in \eqref{Equ:Update:ConsensusForm} can also be regarded as a weighted sum, which pushes $\xih$ to match $\xjb$, $\forall j\in J_i$.
The weight that agent $i$ assigns to agent $j$ is the multiplication of $\omega^{(k)}_{ij}$ and $\Pjb^{-1}$.
The former component is selected to increase the confidence of the estimate while preserving its consistency;
the latter one, which implies the confidence of each neighbor's local prior estimate, pushes the local estimate toward its neighbors' with higher confidence.

Let $\delta$ be a scalar parameter. If the last term in \eqref{Equ:Update:ConsensusForm} is replaced with $\delta\Pih\sum_{j\in J_i} (\xjb-\xib)$,
the update step of the local posterior estimates proposed by the KCF is obtained.
The KCF\cite{olfati2009kalman} has been shown to perform well with guaranteed convergence under certain conditions.
However, it should be noted that the KCF relies on the assumption that the state of interest is jointly observable in the inclusive neighborhood of every agent.
The divergence of the KCF in presence of naive agents has been shown in \cite{kamal2011GKCF} and \cite{kamal2013ICFTAC}.
Essentially with the last term of \eqref{Equ:Update:ConsensusForm} replaced with $\delta\Pih\sum_{j\in J_i} (\xjb-\xib)$ in the KCF,
each agent equally weighs its neighbors' prior estimates regardless of the confidence of these estimates.
The performance is therefore deteriorated especially
when there exist some other naive agents in its neighborhood.
More detailed analysis on the performance of the KCF in presence of naive agents can be found in \cite{kamal2011GKCF}.
\subsection{Fully Distributed}

In the GKCF\cite{kamal2011GKCF} and ICF\cite{kamal2013ICFTAC}, the rate parameter $\epsilon$ is required to be chosen between $0$ and $1/\Delta_\text{max}$ to guarantee the performance of the embedded average consensus algorithm.
If the maximum in-degree is changing with time,
a proper selection of $\epsilon$ might not be as good as before.
Even if $\Delta_\text{max}$ is known, it is not clear how to select a nice $\epsilon$ in the studies of the GKCF and ICF.
In \cite{kamal2011GKCF,kamal2013ICFTAC}, after obtaining a new local measurement, each agent is allowed to communicate with its local neighbors for infinite times, before it finally updates its posterior local estimate at the current time instant.
Therefore, in such a case, any selection between $0$ and $1/\Delta_{\text{max}}$ will guarantee that every agent's local estimate asymptotically becomes identical before its next measurement update.
Unfortunately, in the realistic case where every agent communicates only once with its neighbors before updating its local estimate, the importance of selecting $\epsilon$ is similar to the selection of weights in the proposed DHIF, as discussed in Section \ref{Sec:ImplementDHIF:Wights}.

Besides $\Delta_\text{max}$, the ICF also requires each agent to know the total number of agents $N$ in the network to asymptotically approach the centralized solution via infinite communication steps before the agents update their local estimates.
Similar to the discussion on $\Delta_{\text{max}}$, $N$ could be changing over time.
Moreover, in the case focused in this paper (single communication step), the ICF is not able to obtain the optimal solution even with the correct knowledge of $N$.

The proposed DHIF algorithm does not require any global information. 
It is run in an automated manner and is adaptive to the locally unknown changes in the network.

\subsection{Consistency of Estimates}\label{Sec:Advantages:Consistency}

The consistency is one of the most fundamental but significant properties to be preserved during the information fusion process.
That is, the approximated error covariance should be lower bounded by the true error covariance.
The approximated error covariance of an inconsistent estimator is over-confident, and hence cannot indicate the uncertainty of the estimate.
Let the counterpart of $\Pih$ and $\Pib$ obtained by the ICF be, respectively, $\Pih^\textrm{\tiny ICF}$ and $\Pib^\textrm{\tiny ICF}$.
In the case where the agents communicate only once before updating their local estimates,
the update steps of the ICF proposed in  \cite{kamal2013ICFTAC} can be written as
\begin{equation}\label{Equ:ICF:Updates}
\begin{split}
\Pih^\textrm{\tiny ICF}
=[\sum_{j\in J_i}\sigma_{ij}\cdot {(\Pjb^\textrm{\tiny ICF})^{-1}}+  N\sum_{j\in J_i}\sigma_{ij} H_j^\top R_j^{-1}H_j]^{-1}, 
\end{split}
\end{equation}
where $\sigma_{ij},\ \forall j$ are the weights
related to the rate parameter $\epsilon$.
Therefore, it is very possible that
\begin{equation}\label{Equ:ICF:AsymConstency}
  N\sum_{j\in J_i}\sigma_{ij} H_j^\top R_j^{-1}H_j\succeq \sum_{j\in J_i}H_j^\top R_j^{-1}H_j.
\end{equation}
Note that the right-hand side of the above equation is the total information contained in $\{z_{j,k}\}_{j\in J_i}$.
Therefore,  it is possible that the $\Pih^\textrm{\tiny ICF}$ obtained from \eqref{Equ:ICF:Updates} is smaller than the true error covariance of the local posterior estimates, especially when $N\gg|J_i|$, which is usually the case in sparse wireless sensor networks.


\subsection{Confidence of Estimates}\label{Sec:Advantages:Confidence}

While the consistency is guaranteed, one would also come up with an estimate that is as confident as possible.
In the case where the agents communicate only once before local updates, the update step of the approximated error covariance obtained by the KLA algorithm \cite{Battistelli2014KLA_Automatica}, denoted as $\Pih^\textrm{\tiny KLA}$, has the following form:
\begin{equation}\label{Equ:KLA:Updates}
\begin{split}
\Pih^\textrm{\tiny KLA}&=\{\sum_{j\in J_i}\sigma_{ij} [(\Pjb^\textrm{\tiny KLA})^{-1}+  H_j^\top R_j^{-1}H_j]\}^{-1},
\end{split}
\end{equation}
where $\sigma_{ij}=\{\Sigma\}_{(i,j)}$ with $\Sigma$ being a stochastic matrix associated with the communication graph.
In general, given the same set of prior estimates and local measurements from the inclusive neighborhood, the local posterior estimate obtained by \eqref{Equ:Correction} is more confident than that obtained by \eqref{Equ:KLA:Updates}.
This is mainly due to the following two reasons.

\textbf{(1)}
The KLA algorithm did not specify the selection of the weights $\sigma_{ij},\ \forall j$.
The only constraint formulated in \cite{Battistelli2014KLA_Automatica} is $\Sigma$ being a primitive stochastic matrix.
When only one communication step is feasible,
the weights really matter in minimizing the fused $\Pih^\textrm{\tiny KLA}$.

\textbf{(2)}
More importantly, even if we suppose that the same set of weights, say $\sigma_{ij},\ \forall j\in J_i$, are selected for both fusion processes in \eqref{Equ:KLA:Updates} and \eqref{Equ:Correction}, the update steps of the proposed DHIF are still guaranteed to give more confident estimates since $\sum_{j\in J_i}H_j^\top R_j^{-1} H_j\succeq\sum_{j\in J_i}\sigma_{ij} H_j^\top R_j^{-1} H_j$, which in turn, implies that $\Pih^\textrm{\tiny KLA}\succeq\Pih$.

Superficially,  \eqref{Equ:Correction} might look similar to \eqref{Equ:KLA:Updates}.
However, the philosophies behind the proposed DHIF and the KLA are very different.
Compared to the proposed DHIF, the update steps of the KLA can also be regarded as a two-step fusion process.
The first step is that, for each $i\in\Vcal$, fuse $z_{i,k}$ and $\xib^\textrm{\tiny KLA}$ into an intermediate estimate $\check{x}^\textrm{\tiny KLA}_{i,k}$ with $\check{P}^\textrm{\tiny KLA}_{i,k}$ being its approximated error covariance.
The fusion strategy \eqref{Equ:FusionIndep} is adopted as $z_{i,k}$ and $\xib^\textrm{\tiny KLA}$ have uncorrelated errors.
The second step is to fuse $\{\check{x}^\textrm{\tiny KLA}_{j,k}\}_{ j\in J_i}$.
Note that the information sources in this set have unknown correlations in general due to the fact that $\{\xjb^\textrm{\tiny KLA}\}_{ j\in J_i}$ become highly correlated as the fusion process goes on.
Therefore, the consensus on the PDFs, which is equivalent to the CI algorithm in the case therein (see \cite{Battistelli2014KLA_Automatica} for details), is used to obtain the posterior estimate and approximated error covariance.
Note that during the overall fusion process of the KLA algorithm, the fact that all local measurement noise are mutually uncorrelated is not utilized.
By fusing the $\xib^\textrm{\tiny KLA}$ and $z_{i,k}$ first, $z_{i,k}$ is treated as correlated with $z_{j,k}, \forall j\in N_i$ with unknown correlations.
This makes the final estimate become relatively conservative.

\section{Boundedness of the Estimate Error}

In this section, the sufficient condition for each local estimate error to be upper bounded, is formulated.
Specifically, we assume the communication topology and the set of naive agents are both fixed.
The scenario in which they are time-varying is left for the future work.
Due to the similarities in terms of the structures of the DHIF and KLA algorithms, we borrow the framework of the proofs in \cite{Battistelli2014KLA_Automatica}, and add our extensions.
The following lemma will be used in the proof of the main result.

\begin{lemma}\label{Lemma:SpanForestVsPostiveEntries}
Let $D$ be the stochastic matrix associated with $\Gcal(\Vcal,\Ecal)$.
Let $\Rcal$ and $\Bcal$ be two exclusive sets of vertices such that $\Rcal\cup\Bcal=\Vcal$.
Suppose that $\Gcal(\Rcal,\Ecal_\Rcal)$ is strongly connected,
where $\Ecal_\Rcal\triangleq\{(i,j)|(i,j)\in\Ecal,i,j\in\Rcal\}$.
Also suppose that there exists a directed path from $\Rcal$ to every agent $j\in \Bcal$ but there is no edge from any agent in $\Bcal$ to $\Rcal$.
Then there exists a $\bar{k}$ such that $\{D^{k}\}_{(i,j)}>0$, $\forall i\in\Vcal$ and $j\in\Rcal$, if $k\geq\bar{k}$.
Moreover, the same conclusion holds for $\{\Dfrak_k\}_{(i,j)}$, where
$\Dfrak_k$ is the multiplication of $k$ matrices which belong to the stochastic matrices family induced by $D$.
\end{lemma}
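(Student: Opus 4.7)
The plan is to interpret $\{D^k\}_{(i,j)}>0$ in graph-theoretic terms: expanding the Cauchy product, $\{D^k\}_{(i,j)} = \sum \{D\}_{(i,l_{k-1})}\{D\}_{(l_{k-1},l_{k-2})}\cdots\{D\}_{(l_1,j)}$, which is strictly positive if and only if there exists a walk of length $k$ in $\Gcal(\Vcal,\Ecal)$ from $j$ to $i$ (using the convention that $\{D\}_{(a,b)}>0$ iff $b\in J_a$, i.e., iff the edge $(b,a)$ or the equality $a=b$ holds). The crucial observation is that $i\in J_i$ forces $\{D\}_{(i,i)}>0$ for every $i$, so every vertex carries a self-loop in this walk graph, and thus any walk of length $\ell$ can be padded to any length $k\geq \ell$ by inserting self-loops at an intermediate vertex.

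With that observation in hand, I would split by the location of $i$. For $i,j\in\Rcal$, strong connectivity of $\Gcal(\Rcal,\Ecal_\Rcal)$ supplies a walk from $j$ to $i$ of length at most $|\Rcal|$ using only edges inside $\Rcal$; padding with self-loops then gives $\{D^k\}_{(i,j)}>0$ for all $k\geq |\Rcal|$. For $i\in\Bcal$ and $j\in\Rcal$, the hypothesis provides a directed path from some $r_i\in\Rcal$ to $i$ of length $\ell_i\leq |\Vcal|$; concatenating any walk $j\to r_i$ inside $\Rcal$ of length $k-\ell_i$ (available whenever $k-\ell_i\geq |\Rcal|$) with this path produces a walk of length exactly $k$ from $j$ to $i$. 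Choosing $\bar{k} \triangleq |\Rcal| + \max_{i\in\Bcal}\ell_i$ (which is no larger than $|\Rcal|+|\Vcal|$) handles every pair $(i,j)$ with $j\in\Rcal$ uniformly, and this bound also dominates the one required in the first case.

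To extend to the induced family, note that every stochastic matrix in the family induced by $D$ shares exactly the support of $D$, by the definition given in the paper. Hence in the expansion $\{\Dfrak_k\}_{(i,j)} = \sum \{D^{(1)}\}_{(i,l_{k-1})}\cdots\{D^{(k)}\}_{(l_1,j)}$ the individual summands are positive on precisely the same index sequences as for $D^k$. Since a sum of nonnegative reals with at least one strictly positive term is strictly positive, $\{\Dfrak_k\}_{(i,j)}>0$ for the same $\bar{k}$ and the same index pairs.

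The main obstacle is the \emph{uniformity} of $\bar{k}$: one must supply walks of exactly the required length simultaneously for every relevant pair, rather than of some unspecified length. This is completely dispatched by the self-loop padding trick above, so the argument is essentially combinatorial and requires no spectral or probabilistic machinery. A secondary point of care is that the walks needed for the case $i,j\in\Rcal$ must stay inside $\Rcal$---consistent with the assumption that no edge runs from $\Bcal$ back to $\Rcal$---which is exactly why strong connectivity is asserted for the induced subgraph $\Gcal(\Rcal,\Ecal_\Rcal)$ rather than for $\Gcal$ itself.
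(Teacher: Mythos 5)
Your proof is correct, and it takes a genuinely different route from the paper's. The paper reorders the vertices so that $D$ becomes block lower triangular with diagonal blocks $D_\Rcal$ and $D_\Bcal$, invokes primitivity of $D_\Rcal$ to get $D_\Rcal^k>\zerobf$ (and the limit $D_\Rcal^k\to\onebf v_1^\top$ for a uniform lower bound), and then propagates positivity into the off-diagonal block $D'_{\Bcal\Rcal}=\sum_{r}D_\Bcal^r D_{\Bcal\Rcal}D_\Rcal^{k-1-r}$ by induction along the directed paths from $\Rcal$ into $\Bcal$. You instead characterize $\{D^k\}_{(i,j)}>0$ combinatorially as the existence of a length-$k$ walk from $j$ to $i$ in the support graph of $D$, and use the self-loops forced by $i\in J_i$ to pad walks to any prescribed length. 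What your approach buys: it is more elementary (no Perron--Frobenius or spectral limit), it produces an explicit uniform threshold $\bar k\le|\Rcal|+|\Vcal|$ where the paper only asserts existence, it makes visible the aperiodicity ingredient (the positive diagonal) that the paper's claim ``$D_\Rcal$ is primitive'' silently relies on beyond strong connectivity, and it renders the extension to the induced family immediate, since positivity of an entry of a product depends only on the supports of the factors --- the same fact the paper states more briefly as ``each corresponding entry has the same type.'' What the paper's approach buys is the explicit block structure of $D^k$, which is closer in style to the matrix-analytic arguments used in Theorem~\ref{Thm:UpperBoundedCov}. One cosmetic remark: your bounds on walk lengths ($|\Rcal|$ rather than $|\Rcal|-1$, $|\Vcal|$ rather than $|\Vcal|-1$) are slightly loose, but this is harmless since only existence of some finite $\bar k$ is claimed.
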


\begin{proof}
Without loss of generality, let $\Rcal=\{1,\cdots,|\Rcal|\}$ and $\Bcal = \{|\Rcal|+1,\cdots,N\}$.
Let $D_\Rcal\in\Rmb^{|\Rcal| \times| \Rcal|}$ be the stochastic matrix associated with the strongly connected graph $\Gcal(\Rcal,\Ecal_\Rcal)$.
Thus $D_\Rcal$ is primitive.
Because there is no edge from $\Bcal$ to $\Rcal$,
$D$ can be written as a block lower triangular matrix
\begin{eqnarray*}
D = \left[
\begin{array}{cc}
D_\Rcal & \zerobf\\
D_{\Bcal\Rcal} & D_\Bcal
\end{array}
\right],
\end{eqnarray*}
where $D_{\Bcal\Rcal}$ and $D_\Bcal$ are matrices with appropriate dimensions.
It follows that
\begin{eqnarray*}
D^{{k}} = \left[
\begin{array}{cc}
D_\Rcal^{{k}} & \zerobf\\
D'_{\Bcal\Rcal}  & D_\Bcal^{{k}}
\end{array}
\right],
\text{ where }
D'_{\Bcal\Rcal}=\sum_{r=0}^{{k}-1}D_\Bcal^r D_{\Bcal\Rcal}D_\Rcal^{{k}-1-r}.
\end{eqnarray*}
As $D_\Rcal$ is primitive, $D_\Rcal^{{k}} > \zerobf$ for any sufficiently large $k$.
Thus it remains to show that $D'_{\Bcal\Rcal}>\zerobf$ for any sufficiently large $k$.
This can be shown by induction.

Suppose that when $k=k'$, $D_\Rcal^k>\zerobf$.
Also suppose that agent $ i_1\in\Bcal$ has a direct edge from $\Rcal$.
Thus, $\{D\}_{( i_1,j\in\Rcal)}$ contains at least one positive entry.
It follows that\begin{equation*}
\begin{split}
\{D^{k'+1}\}_{( i_1,j\in\Rcal)}
&=
\{D D^{k'}\}_{( i_1,j\in\Rcal)}
=
\{D \}_{( i_1,:)}
\{ D^{k'}\}_{(:,j\in\Rcal)}\\
&\geq
\{D \}_{( i_1,j\in\Rcal)}D_\Rcal^{k'},
\end{split}
\end{equation*}
where the last inequality is due to the fact that $D\geq\zerobf$.
As $\{D\}_{( i_1,j\in\Bcal)}$ contains at least one positive entry and $D_\Rcal^{k'}>\zerobf$,
it follows that $\{D^{k'+1}\}_{( i_1,j\in\Rcal)}>\zerobf$.
Moreover, it follows that as $k'\rightarrow\infty$,
$D_\Rcal^{k'}\rightarrow
\onebf v_1^\top$,
where $ v_1>\zerobf$ is the left eigenvector associated with the simple eigenvalue $1$ of $D_\Rcal$ satisfying $\| v_1\|_1=1$.
That is, each entry in $\{D^k \}_{( i_1,j\in\Rcal)}$ is uniformly lower bounded above, for any $k\geq k'$.
Let $ i_2\in\Bcal$ be a child vertex of $ i_1$.
Thus, $\{D\}_{( i_2, i_1)}>0$.
Since $\{D^{k'+1} \}_{( i_1,j\in\Rcal)}>\zerobf$ as previously shown, it follows that
\begin{equation*}
\begin{split}
&\{D^{k'+2}\}_{( i_2,j\in\Rcal)}
=
\{D D^{k'+1}\}_{( i_2,j\in\Rcal)}\\
=&
\{D \}_{( i_2,:)}
\{ D^{k'+1}\}_{(:,j\in\Rcal)}
\geq
\{D \}_{( i_2, i_1)}\{D_\Rcal^{k'+1}\}_{( i_1, j\in \Rcal)}>\zerobf.
\end{split}
\end{equation*}
By the similar approach, it can also be shown that $\{D^k \}_{( i_2,j\in \Rcal)}>\zerobf$ for any $k\geq k'+1$.
Therefore, it can be eventually shown that for any agent $i$ that has a directed path from some leader component $\Rcal$, there exists a sufficiently large $\bar k$ such that,
 for any $k\geq\bar k$, $\{D^{k}\}_{(i,j)}>0$, $\forall j\in\Rcal$.

As each corresponding entry of all matrices, which belong to the stochastic matrix family induced by the same matrix $D$, has the same type (zero/nonzero), the same conclusion holds for $\{\Dfrak_k\}_{(i,j)}$
\end{proof}

The following lemma is also used for the proof later on.
\begin{lemma} [\cite{Battistelli2014KLA_Automatica}]{\label{Lemma:GB_Lemma_upper_bounded_PSD}}
Let $F$ be a nonsingular matrix.
Then for any $Y\succ\zerobf$ and $\check{\Omega}\succeq\zerobf$, there exists a $\check{\beta}\in(0,1]$ such that
$(F\Omega^{-1} F^\top+Y)^{-1}\succeq\check{\beta}F^{-\top}\Omega F^{-1}$ for any $\Omega^{-1}\succeq\check{\Omega}^{-1}$.
\end{lemma}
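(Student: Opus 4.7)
The plan is to turn the matrix inequality into a uniform scalar bound on $Y$ versus a positive-definite matrix induced by $\Omega$. Set $A \triangleq F\Omega^{-1}F^\top$; since $F$ is nonsingular and $\Omega^{-1} \succeq \check{\Omega}^{-1}$ forces $\Omega^{-1}$ to be positive definite, $A \succ \zerobf$ and $F^{-\top}\Omega F^{-1}=A^{-1}$. The claim $(A+Y)^{-1}\succeq\check{\beta} A^{-1}$ is therefore equivalent, after inverting both positive-definite sides, to $A+Y\preceq\check{\beta}^{-1}A$, i.e.\ to $Y\preceq(\check{\beta}^{-1}-1)A$.

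Next, I would show that the right-hand side of this inequality is minimized over the parameter class by replacing $\Omega$ with $\check{\Omega}$. Congruence by the nonsingular matrix $F$ is order-preserving, so $\Omega^{-1}\succeq\check{\Omega}^{-1}$ yields $A=F\Omega^{-1}F^\top\succeq F\check{\Omega}^{-1}F^\top\triangleq\check{A}$. Hence, if a single $\check{\beta}\in(0,1]$ can be chosen that satisfies $Y\preceq(\check{\beta}^{-1}-1)\check{A}$, then $Y\preceq(\check{\beta}^{-1}-1)A$ holds for every admissible $\Omega$, and the desired inequality follows by inverting back.

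For the scalar bound, I would use that $\check{A}\succ\zerobf$ and $Y\succ\zerobf$ are symmetric positive definite of the same size, so the largest generalized eigenvalue $\gamma\triangleq\lambda_{\max}(\check{A}^{-1/2}Y\check{A}^{-1/2})$ is finite and positive, giving $Y\preceq\gamma\check{A}$. Choosing $\check{\beta}\triangleq(1+\gamma)^{-1}\in(0,1]$ yields $\check{\beta}^{-1}-1=\gamma$, so that $Y\preceq\gamma\check{A}=(\check{\beta}^{-1}-1)\check{A}\preceq(\check{\beta}^{-1}-1)A$, and inverting back delivers the claim with the desired constant $\check{\beta}$ that is independent of $\Omega$.

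The only subtlety I anticipate is interpreting $\check{\Omega}^{-1}$ when $\check{\Omega}$ is merely positive semi-definite, since the statement writes it as an explicit inverse. The natural reading treats $\check{\Omega}^{-1}$ as acting as $+\infty$ on $\ker(\check{\Omega})$, which forces every admissible $\Omega$ to share the same range as $\check{\Omega}$; on that common subspace $\check{A}$ is positive definite and the argument above carries through verbatim. If instead one adopts the convention that the lemma is invoked only with $\check{\Omega}\succ\zerobf$ (which is the case in the boundedness proof that uses this lemma), no modification is needed.
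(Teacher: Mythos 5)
Your proof is correct. Note, however, that the paper itself offers no proof of this lemma --- it is quoted verbatim from \cite{Battistelli2014KLA_Automatica} --- so there is no in-paper argument to compare against; your reduction (invert both sides, use congruence by $F$ to replace $\Omega^{-1}$ by its worst-case lower bound $\check{\Omega}^{-1}$, then absorb $Y$ via the largest generalized eigenvalue $\gamma$ and set $\check{\beta}=(1+\gamma)^{-1}$) is essentially the standard proof given in that reference, and you are right to flag that the hypothesis must effectively be $\check{\Omega}^{-1}\succ\zerobf$ (as it is in the paper's only invocation of the lemma) for the statement to hold as written.
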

\begin{theorem}\label{Thm:UpperBoundedCov}
Suppose that Assumptions \ref{Asump:NaiveAgentInfUncertainty}, \ref{Asump:UncorrelatedViVj_ETAiVj} and Eq. \eqref{Equ:InitialCondition} are satisfied.
Also suppose that the state transition matrix $F$ is nonsingular.
Then for each agent $i\in\Vcal$,
there exists $\bar{k}\in\Zmb^+$ and $\bar{P}_i\succ\zerobf\in\Rmb^{n\times n}$, such that
$\bar{P}_i\succeq\Pih$, $ \forall k>\bar{k}$,
if there exists at least one strongly connected subgraph $\Gcal(\Rcal,\Ecal_\Rcal)$ with joint observability, where $\Rcal\subseteq\Vcal$ and $\Ecal_\Rcal\triangleq\{(i,j)|(i,j)\in\Ecal,i,j\in\Rcal\}$
that has a directed path from $\Rcal$ to agent $i$.
\end{theorem}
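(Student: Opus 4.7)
The plan is to adapt the framework of \cite{Battistelli2014KLA_Automatica} to the hybrid fusion structure of the DHIF. The overall goal is to show that $\Pih^{-1}$ stays uniformly bounded below in the positive definite order; equivalently to show, by unrolling the recursion over a finite horizon, that enough measurement information from agents in $\Rcal$ has been accumulated at node $i$ through the CI weights.

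First, I would combine the prediction step \eqref{Equ:Alg:CovPrd} with the update step \eqref{Equ:Alg:CovUpdate} to obtain a pure information-matrix recursion
\begin{equation*}
\Pih^{-1}=\sum_{j\in J_i}\omega^{(k)}_{ij}\bigl(F\,P_{j,k-1|k-1}F^\top+BQB^\top\bigr)^{-1}+\sum_{j\in J_i}H_j^\top R_j^{-1}H_j.
\end{equation*}
I would then invoke Lemma \ref{Lemma:GB_Lemma_upper_bounded_PSD} to bound each summand $(F P_{j,k-1|k-1}F^\top+BQB^\top)^{-1}\succeq\check\beta\,F^{-\top}P_{j,k-1|k-1}^{-1}F^{-1}$, provided we have a uniform a priori lower bound $\check\Omega^{-1}$ on the information matrices (which holds for sufficiently large $k$; this will have to be bootstrapped from the initialization and the $BQB^\top$ floor).

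Second, unroll the recursion $\bar{k}$ steps backward. This yields a sum of terms of the form $F^{-\tau\top}H_j^\top R_j^{-1}H_jF^{-\tau}$, weighted by $(\check\beta)^\tau$ and by the $(i,j)$-entry of a product of stochastic matrices $W^{(k)}W^{(k-1)}\cdots W^{(k-\tau+1)}$, where $\{W^{(\ell)}\}_{(i,j)}=\omega^{(\ell)}_{ij}$. Because the SDP constraint in \eqref{Equ:OptimalWeightsSDP} imposes $\omega^{(\ell)}_{ij}\geq\underline\omega_i>0$ for every $j\in J_i$, every $W^{(\ell)}$ belongs to the stochastic-matrix family induced by the adjacency-based $D$ of $\Gcal(\Vcal,\Ecal)$. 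Lemma \ref{Lemma:SpanForestVsPostiveEntries}, applied with the leader set $\Rcal$ (its hypothesis is exactly the theorem's: strongly connected plus a directed path from $\Rcal$ to $i$), then guarantees the existence of $\bar k$ such that the $(i,j)$-entries of the product are strictly positive, and uniformly bounded below, for all $j\in\Rcal$.

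Third, combine: on the horizon $[k-\bar k+1,k]$, the bound produced above includes
\begin{equation*}
\Pih^{-1}\ \succeq\ \gamma\sum_{\tau=0}^{\bar k-1}F^{-\tau\top}\Bigl(\sum_{j\in\Rcal}H_j^\top R_j^{-1}H_j\Bigr)F^{-\tau}
\end{equation*}
for some constant $\gamma>0$ (collecting $\check\beta^\tau$ and the positive weight-product entries). Joint observability of $(F,\mathrm{col}\{H_j\}_{j\in\Rcal})$ on the subgraph $\Gcal(\Rcal,\Ecal_\Rcal)$ and the nonsingularity of $F$ make the right-hand side positive definite once $\bar k$ is chosen at least as large as the observability index. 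Inverting gives the desired uniform upper bound $\bar P_i\succeq\Pih$ for all $k>\bar k$.

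The main obstacle is the circular dependence in step one: Lemma \ref{Lemma:GB_Lemma_upper_bounded_PSD} requires a pre-existing positive-definite floor on $\Pib^{-1}$, but that floor is itself what we are trying to establish. Resolving this requires a careful bootstrap argument—either using the measurement information $\sum_{j\in J_i}H_j^\top R_j^{-1}H_j\succeq\zerobf$ together with the dynamic noise $BQB^\top\succ\zerobf$ to secure a weak preliminary lower bound that persists through prediction, or following the KLA proof and decoupling the recursion into a fixed reference covariance and a perturbation term. A secondary subtlety is that $W^{(\ell)}$ varies with $\ell$, so the extension of Lemma \ref{Lemma:SpanForestVsPostiveEntries} to arbitrary products $\Dfrak_k$ from the induced family (already stated in that lemma) is essential and must be invoked explicitly, together with uniform lower bounds on the positive entries so that $\gamma$ above does not collapse to zero.
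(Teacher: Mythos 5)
Your proposal follows essentially the same route as the paper's proof: the same information-form recursion obtained by combining \eqref{Equ:Alg:CovUpdate} with \eqref{Equ:Alg:CovPrd}, the same use of Lemma~\ref{Lemma:GB_Lemma_upper_bounded_PSD} on the prediction step, the same finite-horizon unrolling weighted by entries of products of the CI stochastic matrices, Lemma~\ref{Lemma:SpanForestVsPostiveEntries} to obtain uniformly positive $(i,j)$-entries for $j\in\Rcal$, and joint observability together with nonsingularity of $F$ to make the accumulated information term positive definite. The one remark worth adding is that the ``circular dependence'' you flag is not actually circular: Lemma~\ref{Lemma:GB_Lemma_upper_bounded_PSD} requires a uniform \emph{upper} bound on the information matrices (equivalently a floor on the covariances, which the prediction step supplies for free via $\Pib\succeq BQB^\top$ together with consistency), whereas the theorem seeks a \emph{lower} bound on the information matrices, so the two bounds point in opposite directions and no bootstrap is needed --- the paper dispatches this step by simply invoking $\Pih\succeq\tilde{P}_{i,k|k}\succ\zerobf$ from Theorem~\ref{Thm:Consistency}.
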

\begin{proof}
As the proof follows the framework of Theorem 3 in \cite{Battistelli2014KLA_Automatica}, only the part contributed by this paper will be shown in detail.
As shown in Theorem \ref{Thm:Consistency}, if \eqref{Equ:InitialCondition} is satisfied, then for any $k\in\Zmb^+$,
$P_{i,k|k}\succeq \tilde{P}_{i,k|k}\succ \zerobf$.
Therefore by Lemma \ref{Lemma:GB_Lemma_upper_bounded_PSD},
$(F\Pih F^\top+BQB^\top)^{-1}\succeq\check{\beta}F^{-\top}\Pih^{-1} F^{-1}$.
Define $D_k\in\Rmb^{n\times n}$ such that $\{D_k\}_{(i,j)}\triangleq d_{ij}^{(k)}$, where $d_{ij}^{(k)}=\omega_{ij}^{(k)}$ if $j\in J_i$ and $d_{ij}^{(k)}=0$ if $j\notin J_i$, $\forall i,j$.
It follows from \eqref{Equ:Alg:CovUpdate} that
\begin{equation}\label{Equ:RecursiveCov}
  \begin{split}
\Pih^{-1}&=\bar{S}_{i,k} + \sum_{j} d^{(k)}_{ij}(FP_{j,k-1|k-1}F^\top+BQB^\top)^{-1}\\
&\succeq\bar{S}_{i,k} + \sum_{j} d^{(k)}_{ij}\check{\beta}F^{-\top} P_{j,k-1|k-1}^{-1} F^{-1},
   \end{split}
\end{equation}
Define
$\Dfrak^q_p\triangleq D_q D_{q-1}\cdots \bar D_p$ for some $q> p$, where $\bar {\{ \cdot \}}$ is an element-wised ceil function.
Note that $\Dfrak^q_p$ has the same structure with $D_k^{q-p+1}$ for any $k\in\Zmb^+$.
Suppose that $\bar{k}\geq 2$, using the recursive inequality in \eqref{Equ:RecursiveCov}, it follows that $\Pih^{-1}\succeq\ \check{\beta}^{\bar{k}}\sum_j \{\Dfrak_{k-\bar{k}+1}^k\}_{(i,j)}F^{-\bar{k}\top}P_{j,k-\bar{k},k-\bar{k}}F^{-\bar{k}}+\Gamma_i $, where
\begin{equation*}
\begin{split}
\Gamma_i =& \bar{S}_{i,k} + \sum_{\tau=2}^{\bar{k}}  \check{\beta}^{\tau-1}\sum_j \{\Dfrak_{k-\tau+2}^k\}_{(i,j)}F^{(1-\tau)\top}H_j^\top R_j^{-1} H_j F^{1-\tau}\\
\succeq& \sum_{\tau=2}^{\bar{k}}  \check{\beta}^{\tau-1}\sum_{j\notin \Nfrak} \{\Dfrak_{k-\tau+2}^k\}_{(i,j)}F^{(1-\tau)\top}H_j^\top R_j^{-1} H_j F^{1-\tau}\triangleq\Gamma'_i.
\end{split}
\end{equation*}
The last inequality is due to the fact that $\bar{S}_{i,k}\succeq\zerobf$ and $R_j^{-1}=\zerobf,\ \forall j\in \Nfrak$.
For simplicity, suppose that $\Gcal(\Rcal,\Ecal_\Rcal)$ has a directed path to every other agent\footnote{
When there are multiple disjoint strongly connected subgraphs,
the subgraphs can be treated separately.
}.
Thus Lemma \ref{Lemma:SpanForestVsPostiveEntries} can be used\footnote
{Note that although the first condition in Theorem \ref{Thm:UpperBoundedCov} does not assume there is no edge path from agent $i\notin\Rcal$ to agent $j\in\Rcal$,
this can always be guaranteed by including all $i\notin\Rcal$ which has a directed path to $j\in\Rcal$ into $\Rcal$.
Therefore, Lemma \ref{Lemma:SpanForestVsPostiveEntries} can be applied.}.
Therefore, there exists $\bar{\tau}$ such that if $\tau\geq\bar{\tau}$,
$\{\Dfrak_{k-\tau+2}^k\}_{(i,j)}>0, \forall i\in\Vcal$ and $ j\in\Rcal$.
Recall that $\check{\beta}>0$ and $R_j\succ\zerobf,\ \forall j\notin\Nfrak$.
It follows that if agents in $\Rcal$ has joint observability,
$\Gamma'_i\succ\zerobf$ for any $k\geq\bar{k}\triangleq\bar{\tau}+n$.
Let $\bar{P}_i\triangleq(\Gamma'_i)^{-1}$.
It follows that $\bar{P}_i\succ\zerobf$ and $\bar{P}_i\succeq \Pih$.
\end{proof}

Theorem \ref{Thm:UpperBoundedCov} formulates the sufficient condition for the bounded local posterior estimate errors.
Essentially, the local estimate errors at a certain agent is bounded as long as the agent is able to eventually obtain the knowledge about the entire state from somewhere.


The strategies to decrease the upper bound of $\Pih$ for each $i\in\Vcal$ can also be observed from the proof of Theorem \ref{Thm:UpperBoundedCov}.
Note that $\Pih^{-1}\succeq\Gamma'_i\succ\zerobf$ with $\Gamma'_i$ defined in the proof of sufficiency.
This implies that
the upper bound of $\Pih$ can be decreased if $\Gamma'_i$ is increased.
This is achieved if:
1) there are fewer naive agents so that the summation with respect to $j$ has more positive terms;
2) the graph is denser so that for a certain $\bar{k}$ the summation with respect to $\tau$ has more positive terms; 3) local measurements are more accurate (smaller $R_j$) or redundant ($H_j$ with more rows) so that each term in the summation has a greater value.

\section{Simulations}
In the simulation, the target follows the linear dynamic system in \eqref{Equ:SysDynamics}.
The parameters are adopted from the simulation example used in \cite{Battistelli2014KLA_Automatica}, and listed as follows:
\begin{equation*}
F =
\left[
\begin{array}{cc}
\Id_2 & \Delta_T\Id_2\\
\zerobf & \Id_2
\end{array}\right],\
Q =
\left[
\begin{array}{cc}
5\Delta_T^3 \Id_2/3 & 5\Delta_T^2 \Id_2/2 \\
5\Delta_T^2 \Id_2/2  & 5\Delta_T\Id_2
\end{array}\right],\
B = \Id_4,
\end{equation*}
where $\Delta_T=4$ is the sampling interval.
\begin{figure}
  \centering
  \includegraphics[width=.5\linewidth]{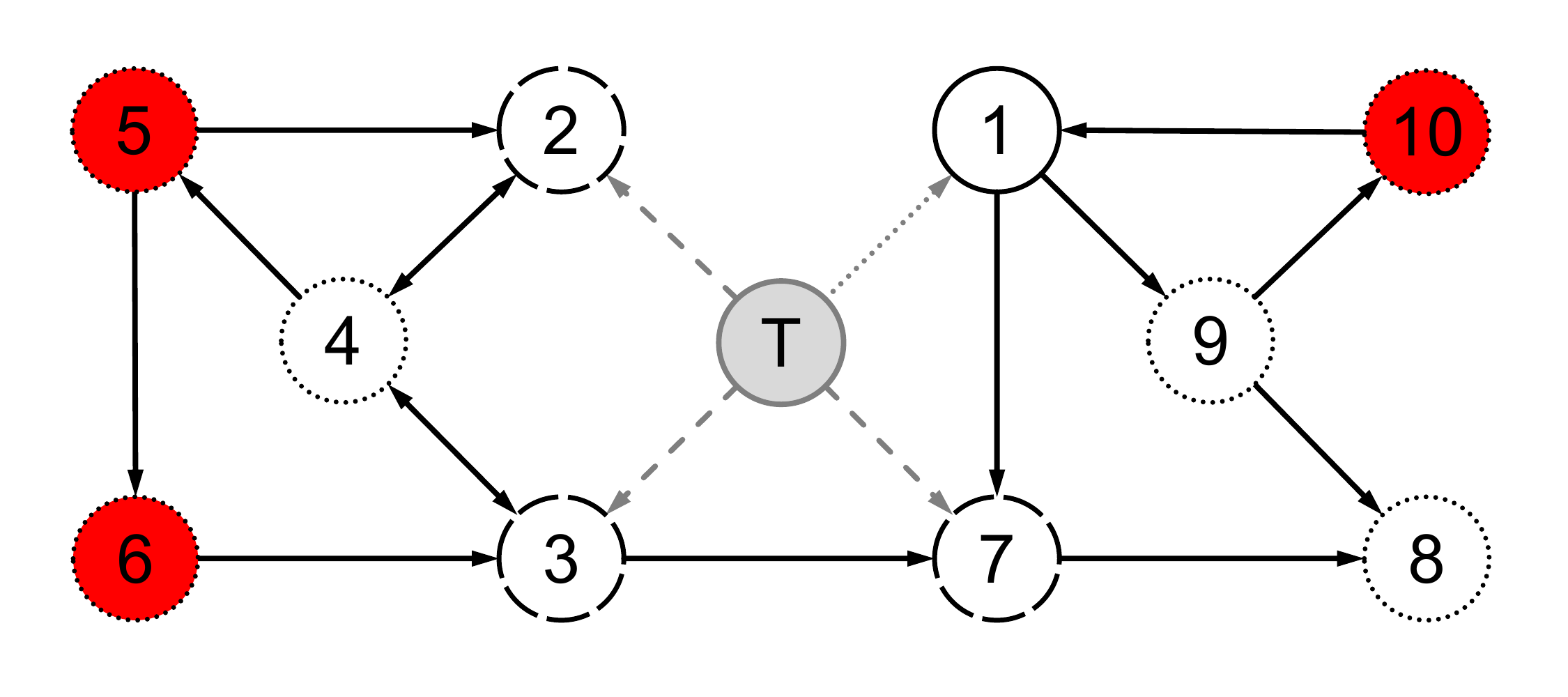}
  \caption{Communication topology for simulation.
  T: target;
  Dashed circle: agents with partial observability;
  Solid circle: agents with full observability;
  Dotted circle: agents not observing the target;
  Red: naive agent.\label{Fig:Simulation:Topo} }
\end{figure}
A distributed sensor network with $N=10$ agents, whose directed communication topology is shown in Figure \ref{Fig:Simulation:Topo}, is used to track the state of the target.
Let $H_2=[1\ 0\ 0\ 0]$, $H_3=H_7=[0\ 1\ 0\ 0]$ and $H_1=[H_2^\top\ H_3^\top]^\top$.
Correspondingly, let $R_2=R_3=R_7=225$, $R_1=225\Id_2$ .
Also let $H_i=\zerobf$ and $R_i^{-1}=\zerobf$ for the other agents.
Note that there exists no directed spanning tree in this topology.
However, the conditions formulated in Theorem \ref{Thm:UpperBoundedCov} are satisfied.

The proposed DHIF algorithm is implemented, where the weights in Step \ref{Alg:Step:ChooseWeights} of Algorithm \ref{Algorithm} are solved from \eqref{Equ:OptimalWeightsSDP}.
The KLA algorithm and the ICF are selected for comparisons.
As the selection of weights was not specifically mentioned in \cite{Battistelli2014KLA_Automatica}, we let $\sigma_{ij}=1/|J_i|,\forall i$.
The rate parameter $\epsilon$ used in the ICF is selected to be the same as that in \cite{kamal2013ICFTAC}, i.e., $\epsilon=0.65/\Delta_{\textrm{max}}$.
A hypothetical centralized Kalman filter, whose observation matrix has the form of $H\triangleq \textrm{col}\{H_i\}_{i=1}^N$, is used as the benchmark.

\begin{figure}
\centering
\subfigure[Agent 1]{\label{eihVS3sigma:1} 
\includegraphics[width=.85\linewidth]{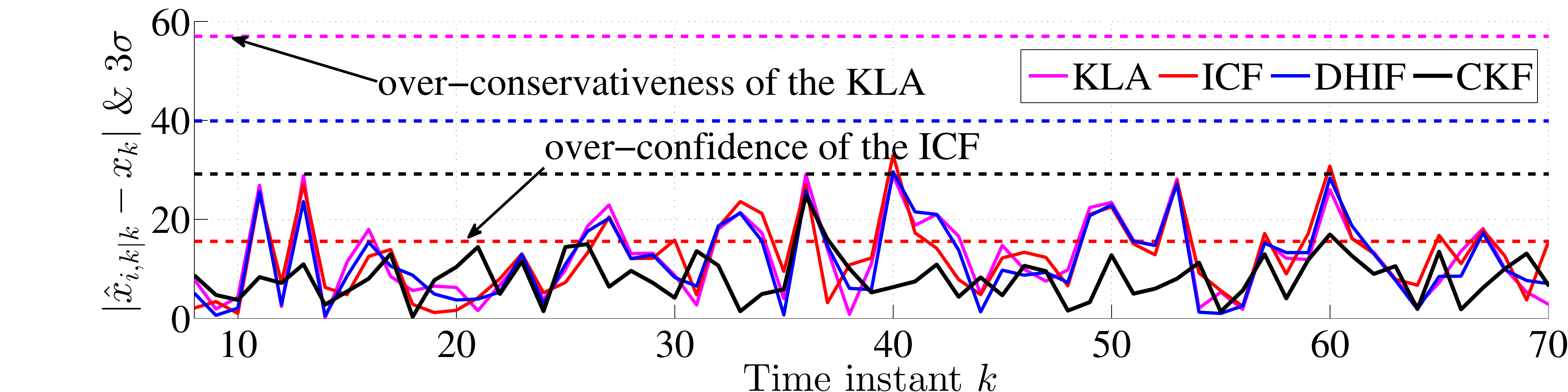}
}
\subfigure[Agent 6]{\label{eihVS3sigma:6} 
\includegraphics[width=.85\linewidth]{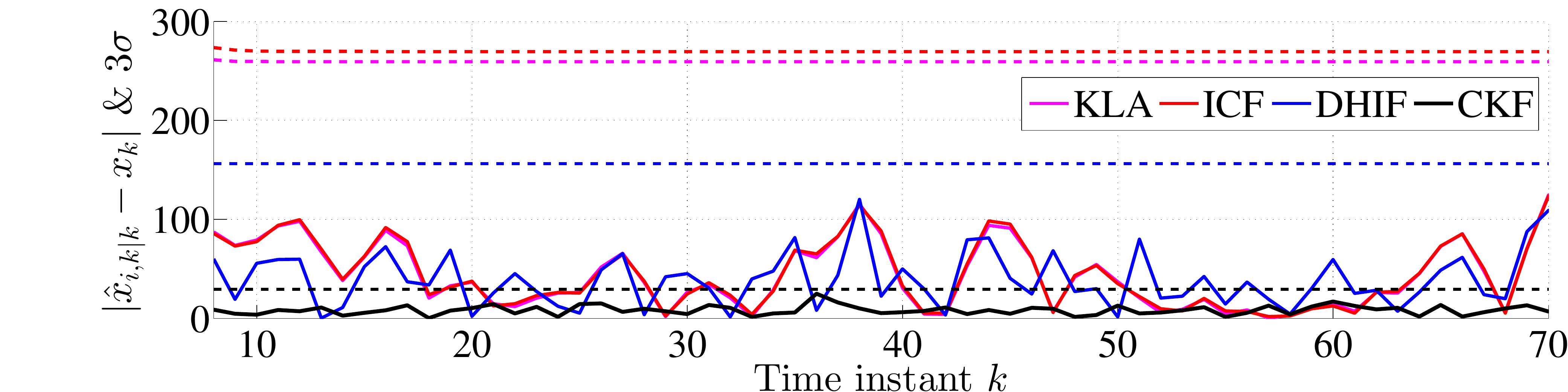}
}
\caption{$|\xih-x_k|$ and $3\sigma$-bound given by each algorithm: x-position}
\end{figure}
The local absolute posterior estimate errors of agent 1 (locally observable) and agent 6 (very naive), obtained by each algorithm, are plotted, respectively, in Figure \ref{eihVS3sigma:1} and Figure \ref{eihVS3sigma:6}.
The $3\sigma$-bound corresponding to each algorithm, with $\sigma$ being the standard deviation (STD) computed from the approximated error covariance by the algorithm, is also plotted in the same color but dashed line.
As observed in Figure \ref{eihVS3sigma:1}, the posterior estimate error of the ICF's exceeds its $3\sigma$-bound frequently.
Actually, its $3\sigma$-bound is even less than the one obtained by the CKF, which provides the minimum possible MSE.
Therefore, the ICF can be overconfident when the agents communicate with each other for only once before updating their local estimates, as analyzed in Section \ref{Sec:Advantages:Consistency}.
In both Figure \ref{eihVS3sigma:1} and Figure \ref{eihVS3sigma:6}, the $\sigma$ value computed by the KLA is around 1.5 times than that computed by the DHIF.

\begin{figure}
\centering
\includegraphics[width=.95\linewidth]{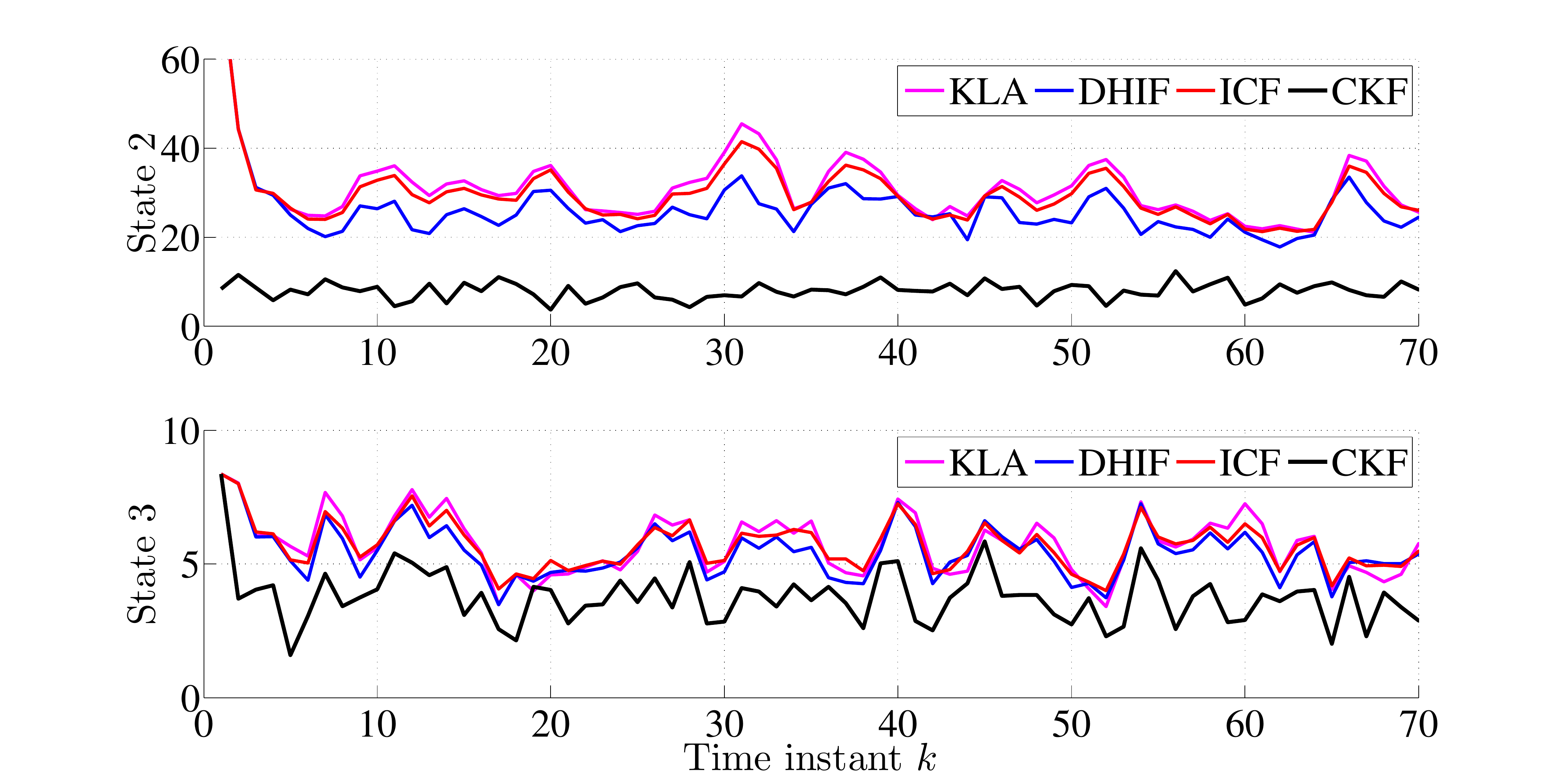}
\caption{Comparisons of RMSE $\psi_k=\sqrt{\frac{1}{T}\sum_{j=1}^T\frac{1}{N}\sum_{i=1}^N(\xih^{(j)}-x_k^{(j)})^2}$ obtained by each algorithm, where the superscript ``$(j)$" is the trial index\label{fig:RMSE}}
\end{figure}
The same simulation as the previous one are further implemented for 500 trials.
The rooted mean squared errors (RMSE) in estimating two state components, denoted as $\psi_k$, for $k\in[1,70]$, are plotted in Figure \ref{fig:RMSE}.
As observed, the proposed DHIF has the lowest RMSE among all distributed algorithms.


\section{Conclusions}
The problem of distributed state estimation of a linear dynamic process in the wireless sensor networks has been considered.
A distributed hybrid information fusion (DHIF) algorithm has been proposed.
In the proposed algorithm, each agent communicates with its neighbors for only once, before updating its local estimates and approximated error covariance.
In the proposed algorithm, the consistency of estimate is guaranteed to be preserved.
Meanwhile, the confidence of each local estimate is improved by efficiently utilizing all information sources available to the local agent.
The proposed algorithm is fully distributed and robust against the presence of naive agents.
The sufficient condition has been formulated to guarantee that the local estimate error being bounded at steady state.


\addtolength{\textheight}{-12cm}   

\bibliographystyle{IEEEtran}
\bibliography{refs}
\end{document}